\newcommand{\STOP}{\mathsf{STOP}}
\newcommand*{\State}{S}
\newcommand*{\Start}{\iota}
\newcommand*{\Public}{\Sigma}
\newcommand*{\Private}{\Gamma}
\newcommand{\trans}[1]{\stackrel{#1}{\longrightarrow}}
\newcommand*{\Always}{\square}
\newcommand*{\Eventually}{\lozenge}
\newcommand*{\False}{\mathsf{False}}
\newcommand*{\Next}{\mathsf{ X }}
\newcommand*{\True}{\mathsf{True}}
\newcommand*{\Untill}{\mathsf{\,U\,}}
\newcommand{\I}{\mathcal{I}}
\renewcommand{\O}{\mathcal{O}}
\newcommand{\Lang}{\mathcal{L}}
\newcommand{\PSPACE}{\textsf{PSPACE}\xspace}
\newcommand{\NLOGSPACE}{\textsf{NLOGSPACE}\xspace}
\newcommand{\TWOEXPTIME}{\textsf{2-EXPTIME}\xspace}
\newcommand{\lNot}{\,\neg}
\newcommand{\lAnd}{\,\wedge\,}
\newcommand{\lOr}{\,\vee\,}
\newcommand{\lEquiv}{\,\equiv\,}
\newcommand{\acc}{\mathsf{accept}}
\newcommand{\rej}{\mathsf{reject}}
\newcommand{\Union}{\,\cup\,}
\newcommand{\Inter}{\,\cap\,}
\newcommand{\Diff}{\,\backslash\,}
\newcommand{\then}{\rightarrow}
\newcommand{\choice}{\;|\;}
\newcommand{\trace}{\mathsf{trace}}
\newcommand{\cL}{\mathcal{L}}
\newcommand{\fulltree}{\mathsf{fulltree}}
\newcommand{\proc}{\mathsf{proc}}
\newcommand{\Lend}{\mathsf{endL}}
\newcommand{\Sink}{\mathsf{Sink}}
\newcommand{\Fail}{\mathsf{Fail}}
\newcommand{\Efail}{\mathsf{Efail}}
\newcommand{\Esink}{\mathsf{Esink}}
\newcommand{\Eprivate}{\mathsf{Eprivate}}
\newcommand{\Enabled}{\mathsf{enabled}}
\newcommand{\geneprivate}{\mathsf{GenEprivate}}
\newcommand{\Joint}{\mathsf{Joint}}
\newcommand{\normtrans}{\mathsf{normalTrans}}
\newcommand{\noSynch}{\mathsf{noSynch}}
\newtheorem*{theorem*}{Theorem}
\newcommand{\elide}[1]{}
\begin{document}

\title[Synthesis of Coordination Programs]{Synthesis of Coordination Programs from Linear Temporal Specifications}         


\author{Suguman Bansal}
\affiliation{
  \institution{Rice University}           
  \city{Houston}
  \state{TX}
  \postcode{77005-1892}
  \country{USA}                    
}
\email{suguman@rice.edu}          

\author{Kedar S. Namjoshi}
\affiliation{
  \institution{Bell Labs, Nokia}            
  \city{Murray Hill}
  \state{NJ}
  \postcode{07974}
  \country{USA}                    
}
\email{kedar.namjoshi@nokia-bell-labs.com}          

\author{Yaniv Sa'ar}
\affiliation{
  \institution{Bell Labs, Nokia}            
  \city{Kfar Saba}
  \postcode{4464321}
  \country{Israel}                    
}
\email{yaniv.saar@nokia-bell-labs.com}          

\begin{abstract}
This paper presents a method for synthesizing a reactive program which coordinates the actions of a group of other reactive programs, so that the combined system satisfies a temporal specification of its desired long-term behavior. Traditionally, reactive synthesis has been applied to the construction of a stateful hardware circuit. This work is motivated by applications to other domains, such as the IoT (the Internet of Things) and robotics, where it is necessary to coordinate the actions of multiple sensors, devices, and robots. The mathematical model represents such entities as individual processes in Hoare's CSP model. Given a network of interacting entities, called an \emph{environment}, and a temporal specification of long-term behavior, the synthesis method constructs a \emph{coordinator} process (if one exists) that guides the actions of the environment entities so that the combined system is deadlock-free and satisfies the given specification. The main technical challenge is that a coordinator may have only \emph{partial knowledge} of the environment state, due to non-determinism within the environment, and environment actions that are hidden from the coordinator. This is the first method to handle both sources of partial knowledge, and to do so for arbitrary linear temporal logic specifications. It is shown that the coordination synthesis problem is \PSPACE-hard in the size of the  environment. A prototype implementation is able to synthesize compact solutions for a number of coordination problems.


\end{abstract}

\begin{CCSXML}
<ccs2012>
<concept>
<concept_id>10003752.10003753.10003761</concept_id>
<concept_desc>Theory of computation~Concurrency</concept_desc>
<concept_significance>500</concept_significance>
</concept>
<concept>
<concept_id>10003752.10003766</concept_id>
<concept_desc>Theory of computation~Formal languages and automata theory</concept_desc>
<concept_significance>500</concept_significance>
</concept>
<concept>
<concept_id>10011007.10010940.10010992.10010993</concept_id>
<concept_desc>Software and its engineering~Correctness</concept_desc>
<concept_significance>500</concept_significance>
</concept>
</ccs2012>
\end{CCSXML}
\ccsdesc[500]{Theory of computation~Concurrency}
\ccsdesc[500]{Theory of computation~Formal languages and automata theory}
\ccsdesc[500]{Software and its engineering~Correctness}

\keywords{coordination, synthesis, temporal logic}  

\maketitle

\section{Introduction}
\label{Sec:Intro}
Coordination problems arise naturally in many settings. In a so-called ``smart'' building, various sensors, heating and cooling devices must work in concert to maintain comfortable conditions. In a fully automated factory, a number of robots with specialized capabilities must collaborate to carry out manufacturing tasks. Typically, the individual agents are {reactive} and a {centralized coordinator} provides the necessary guidance to carry out a task.

A coordination program must work in the presence of several complicating factors such as concurrency, asynchrony, and distribution; it should recover gracefully from agent failures and handle noisy sensor data. All this complicates the design of coordination programs. It is often the case, however, that the task itself can be specified easily and compactly. We therefore consider whether it is possible to \emph{automatically synthesize} a reactive coordination program from a description of the agents and a specification of the desired long-term system behavior. 

\begin{figure}[t]
	\begin{center}
		\includegraphics[scale=0.45]{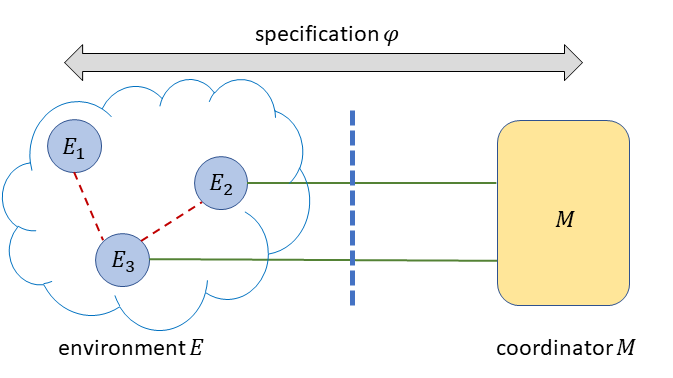}
	\end{center}
	\caption{The coordination model. The coordinator $M$ communicates with the agents $\{E_i\}$ through an interface (shown as solid green lines); it need not have a direct link to every agent. Agent-to-agent interactions (shown as dashed red lines) are hidden from the coordinator, as are actions internal to an agent (not shown). The specification $\varphi$, however, is ``all-seeing'' - it may refer to any action.}
	\label{fig:system}
\end{figure}

We formulate the coordination problem within the extensively studied framework of Communicating Sequential Processes (CSP)~\cite{DBLP:journals/cacm/Hoare78,DBLP:books/ph/Hoare85}, as illustrated in Fig.~\ref{fig:system}. Agents and the coordinator are represented as CSP processes.  Each process has a set of {\em private actions}, which are special to the process, and a set of {\em public actions}, which may be shared with other processes. The {\em coordination synthesis problem} considered here is defined as follows: given CSP processes $E_1, \ldots, E_n$, each modeling a reactive agent, and a temporal specification $\varphi$ over their public and private actions, construct a coordinator CSP process $M$ over the interface actions such that all computations of the combined system 
satisfy the specification $\varphi$. We represent the agents as a single \emph{environment} process $E$, formed by the parallel composition of the individual processes $E_1, \ldots, E_n$. An instance of the problem is {\em realizable} if there is such a CSP process $M$; otherwise, the given problem instance is {\em unrealizable}. The temporal specification is expressed in linear temporal logic (LTL)~\cite{pnueli1977temporal} or, more generally, as an automaton.

This formulation differs from the existing literature on reactive synthesis in several important ways. A major difference is the choice of CSP as a modeling language. Nearly all of the prior work is based on a \emph{synchronous, shared-variable} model of computation (cf.~\cite{church57,church62,buechi-landweber69,rabin69,pnueli1989synchsynthesis,kupferman2005safraless,bohy2012acacia+,gr1journal2012, schewe2013bounded-journal,alur2016compositional,hadaz2018reactive}). That is an appropriate model for hardware design but not for the coordination scenarios described above, simply because those systems are naturally asynchronous: there is no common clock. Pnueli and Rosner~\cite{pnueli1989synthesis} formulate an \emph{asynchronous} shared-variable model, but that is based on a highly adversarial scheduler and is quite weak as a result: for instance, the requirement that an input data stream is copied faithfully to the output has no solution in that model, while it has a simple solution in CSP.

A second difference is the communication model. In the prior work, communication is modeled by reads and writes to a shared memory. This is not a good fit for the targeted application domains, where the natural mode of communication is message passing. For instance, a coordinator may send a message to a robot asking it to lift up its arm, or to move to a given location. Implementing such handshakes in shared memory requires a dedicated synchronization protocol, complicating modeling as well as synthesis. On the other hand, such communication is easily and directly expressed in CSP, at a high level of atomicity that permits more specifications to be realized, as illustrated by the data-copying example. 

The new formulation crucially differs from prior work on synthesis for CSP~\cite{manna1981synthesis,wolper1982specification,d2013synthesizing,DBLP:journals/tse/CiolekBDPU17} in its treatment of hidden actions. The prior methods require that \emph{every} environment action is visible to a coordinator.  That can result in unnecessarily complex coordinators, which are forced to respond to every action, even if not all of those actions are relevant to achieving a goal. Moreover, full visibility goes against the key principle of information hiding in software design: indeed, the CSP language includes operators to limit the scope and visibility of actions, making models with hidden actions the common case.  

A coordinator must, therefore, operate correctly with only \emph{partial information} about the state of the environment agents. First, the state of an agent may be changed autonomously by a private action. Such transitions are hidden from the coordinator, introducing one form of partial information. Second, agents may interact and exchange information via a shared public action; such interactions are also unobserved by the coordinator, introducing yet another form of partial information. Finally, any interaction -- including one between an agent and the coordinator -- may have a non-deterministic effect, so that the precise next state is not known to the coordinator, introducing a third form of partial information.

The specification, on the other hand, is ``all-seeing'' and can refer to any action of the combined system. For instance, a specification may include a fairness assumption that rules out an infinite sequence of agent-to-agent interactions. 

Models with hidden actions alter the synthesis problem in fundamental ways, requiring the development of new algorithms. In the absence of hidden actions, the environment and the coordinator are synchronized. If the coordinator takes $n$ steps, the environment must also have taken $n$ steps. The inclusion of hidden actions introduces \emph{asynchrony}: if the coordinator has taken $n$ steps, the environment may have taken $n+m$ steps, where $m$ is the number of hidden  actions; $m$ is unknown to the coordinator and could be unbounded. Although the $m$ hidden actions are invisible to the coordinator, they could be referenced in the specification and cannot simply be ignored. The heart of our algorithm is a transformation that incorporates the effects of the hidden actions into the original temporal specification, producing a new specification that is expressed purely in terms of the interface actions. The transformed specification is then synthesized using existing synthesis methods. 

Our work handles specifications written in LTL, and carries out the specification transformation using automata-theoretic methods. The transformed automaton has a number of states that is \emph{linear} in the size of the automaton for the negated LTL specification and in the size of the environment model. We show how to express the transformation fully symbolically, as the number of transitions is exponential in the number of interface actions. This is in contrast to prior algorithms for CSP synthesis~\cite{DBLP:journals/tse/CiolekBDPU17}, which require explicit determinization steps that may introduce \emph{exponential} blowup in the number of states. The final synthesis step using the transformed specification is also carried out symbolically, via a translation to a Boolean SAT/QBF problem.

An LTL synthesis method is needed for this final step. The GR(1) subset of LTL is often used for specification, as it has efficient symbolic synthesis algorithms in the synchronous model~\cite{gr1journal2012,piterman2006synthesis}. The efficiency advantage is unfortunately lost under asynchrony, as it is not known whether GR(1) is closed under the specification transformation. Thus, even if the original specification is in GR(1), the transformed specification may lie outside GR(1), necessitating the use of a general LTL synthesizer.



Coordination synthesis has a high inherent complexity: we show that the question is
\PSPACE-hard in the size of $E$ for a fixed specification; in comparison,
model-checking $E$ is in \NLOGSPACE. This rather severe hardness result is not
entirely unexpected since problems in reactive synthesis typically exhibit high
worst-case complexity. For instance, synthesis in the Pnueli-Rosner model 
is \TWOEXPTIME-complete in the size of the LTL specification~\cite{pnueli1989synchsynthesis,pnueli1989synthesis}.
		
A prototype implementation of our algorithm  successfully synthesizes solutions to
several coordination problems. It uses Ordered Binary Decision
Diagrams (OBDDs)~\cite{bryant1986graph} for symbolic manipulation. The transformed specification is then checked for realizability
using the state-of-the-art symbolic synthesis tool BoSy~\cite{FFRT17}. We present case studies that synthesize a smart thermostat
and an arbiter for a number of concurrent processes. These studies illustrate the capabilities of the model and show
how synthesis improves the experience of designing coordination programs.
The prototype has limited scalability, the primary bottleneck being the capacity of the back-end synthesis tools.
Improvements to these tools (an active research topic) will therefore have a positive effect on the scalability of coordination synthesis.
The most impressive impact on scalability is, however, likely to come from entirely new methods for synthesis
that use symmetry, modularity, and abstraction strategies to effectively handle large state spaces,
drawing inspiration from the success of such techniques in formal verification.

\elide{

The prior work handles non-determinism but not hidden actions: as our model includes both, new synthesis methods are needed. 

For example, in the Thermostat (7.2.2, Fig 6-8), the coordinator is not exposed to actions “ACisOn” and “HeatisOn”. If hidden actions were not allowed (as in [D'Ippolito-et-al-2013]), the coordinator would have to respond to those actions, complicating its structure. Even in this small example, allowing internal actions contributes to simpler coordinator design. 

To support information hiding, CSP includes operators that limit the scope and visibility of actions, making models with internal actions the common case, not an exception. Our work is the first to show how to synthesize with such models.  

For instance, consider the following simple scenario. We have available an illumination sensor, a clock, a number of decorative outdoor lamps, and a timer, and would like to light up the lamps for, say, two hours after dusk. The task is easily described in temporal logic. For this coordination problem, it suffices to model each device as a small, abstract finite state machine. The illumination sensor model may have only two states, low and high; the clock model needs to be accurate only to the hour, and the timer model can have a similar coarse granularity; while each lamp is modeled as being either on or off. A reactive coordination program for this task would check the illumination sensor and the clock to determine when it is dusk, switch on the lamps and set the timer for two hours, switch off the lamps once the timer expires, and repeat the cycle the following day.

Although this work coins the term ``\emph{coordination synthesis}'' there is prior work on synthesizing reactive coordination programs. The core contributions of this work are an expressive framework for specifying coordination problems, and a new synthesis algorithm that constructs coordinators which operate with partial information of the environment state, while guiding the combined system so that it satisfies a specification from a rich class of temporal properties.

Recent work~\cite{d2013synthesizing} shows how to synthesize coordinators that can handle a particular form of partial information: non-determinism at an agent interface. That algorithm, however, is limited to specifications in a subclass of linear temporal logic called GR(1), while the model does not allow agents to have internal actions or private agent-to-agent interactions, which are a natural feature of many scenarios. Our new framework allows all these forms of interaction, which are hidden from the coordinator,  while the synthesis algorithm applies to specifications expressed as arbitrary LTL formulas or automata. In both respects, this represents a significant expansion of the class of solvable coordination problems.

The {\em second contribution} of this work is an algorithm to solve this problem.
To the best of our information, this is the first algorithm that takes all three sources of partial information into
account, and does so for arbitrary LTL specifications. (An in-depth
discussion of related work appears later in this paper.) We reduce coordination
synthesis to the ordinary synthesis of a new temporal specification which can be
represented as an automaton. This automaton, called the {\em specification
automaton}, combines the behavior of the reactive agents $E_1,\ldots,E_n$, 
in an asynchronous setting, with an automaton $\mathcal{A}$ for the LTL specification $\varphi$.

The key challenge in this construction is to represent the partial information of
the environment state. A naive encoding would use a subset
construction, producing an exponential blowup in the number of states of $E$.
We provide an alternate construction, where the number of states in our specification automaton grows \emph{linearly} with the number of states of $E$
and the number of states of the property automaton $\mathcal{A}$. 
Although the number of states grows linearly, the transition relation of the 
specification automaton has size exponential in the number of public
actions (i.e., interface) of the environment $E$. 
To ameliorate the exponential blowup incurred by the transition relation we use
symbolic constructions.
		
Towards a full understanding of the inherent difficulty of coordination
synthesis,  
our {\em third contribution} is to
prove a complexity-theoretic result showing that the question is
\PSPACE-hard in the size of $E$ for a fixed specification; in comparison,
model-checking is in \NLOGSPACE. This rather severe hardness result is not
entirely unexpected since problems in reactive synthesis typically exhibit high
worst-case complexity. For instance, both synchronous and asynchronous synthesis
for 
arbitrary LTL specifications are known to be \TWOEXPTIME-complete in the size
of the specification~\cite{pnueli1989synchsynthesis,pnueli1989synthesis}.
		
We also present a prototype implementation of our algorithm, which uses a
symbolic construction of the specification automaton, using Ordered Binary Decision
Diagrams (BDDs)~\cite{bryant1986graph}, and relies on the state-of-the-art tool
BoSy~\cite{FFRT17} for the final ordinary synthesis step. We present case
studies on synthesizing a smart thermostat, and arbiters for concurrent
processes. Our case studies illustrate the capabilities of the model, and show
how synthesis improves the experience of designing coordination
programs. While this prototype has limited scalability, additional
empirical analysis reveals that the primary bottleneck is in the current capabilities of the back-end synthesis tools.
Improvements to these tools (an active research topic) will therefore have a positive effect on the scalability of coordination synthesis.
The most impressive impact on scalability is, however, likely to come from entirely new synthesis methods which exploit symmetry, modularity, and abstraction,
drawing inspiration from the success of such techniques in formal verification. 

The classical models for synchronous and asynchronous synthesis, mentioned above, are both subsumed by this model (see Appendices ~\textsection~\ref{Sec:async-sim}-~\ref{Sec:sync-sim}).
}

\section{Illustrative Examples}
\label{Sec:MotivatingEg}

We present a series of small examples to illustrate the important features of the model and the considerations that must go into the design of a synthesis algorithm. This section focuses on giving a reader an intuitive view of the issues, precise definitions follow in later sections. 

\paragraph{CSP notation and semantics.}
The full CSP language has a rich structure (cf.~\cite{DBLP:books/ph/Hoare85,Roscoe-TPC-1997}); here we use only the most basic ``flat'' form of a CSP process, specified by a set of equations of the form below. 
\begin{equation*}
  P = a_0 \then Q_0 \choice a_1 \then Q_1 \choice \ldots \choice a_{n-1} \then Q_{n-1}
\end{equation*}
The meaning is that process $P$ evolves to process $Q_0$ on action $a_0$; to $Q_1$ on action $a_1$; and so forth. The actions need not be distinct: $P = a \then Q_0 \choice a \then Q_1$ represents a non-deterministic choice between $Q_0$ and $Q_1$ on action $a$. One may also view $P$ as a state in a state machine, with the equation specifying the transitions at state $P$: the machine moves to state $Q_0$ on action $a_0$; to state $Q_1$ on action $a_1$; and so forth. The special process $\STOP$ has no outgoing transitions at all and thus represents a dead-end state. An entire state machine is thus described by a set of interrelated equations, one for each state.

In CSP,  processes communicate only through an instantaneous, synchronized interaction on a common action. If processes $P$ and $Q$ have a common action $a$, and if $P$ may evolve to $P'$ on $a$ while $Q$ may evolve to $Q'$ on $a$, then their concurrent composition, denoted $P\parallel Q$, may \emph{jointly} evolve to $P'\parallel Q'$ through synchronization on action $a$. On the other hand, a process may also have internal actions, which are unsynchronized; so that if $P$ may evolve to $P''$ on an internal action $b$, then the composition $P\parallel Q$ may evolve to $P''\parallel Q$ on $b$.

\paragraph{Coordination Problems.} 
In the examples below, $E$ denotes the single environment process; we denote its public (i.e., interface) actions by $a_0,a_1,\ldots$, and its private (i.e., internal) action as $b$. The temporal specification of desired behavior, $\varphi$, is fixed to be ``finitely many $b$ actions''; this may be represented in temporal logic as $\Eventually\Always(\lNot b)$. The goal is to construct a coordinator process, $M$, such that the combined system, $E\parallel M$, is (1) free of deadlock, and (2) satisfies $\varphi$ on each of its infinite executions. If such a process exists, it is called a \emph{solution} to the coordination problem $(E,\varphi)$. A coordination problem is called \emph{realizable} if it has a solution, and \emph{unrealizable} otherwise. The examples illustrate both outcomes. 

\paragraph{Example 0: Non-blocking.}
The first example illustrates how a coordinator can guide the system so that it is deadlock-free. 
Consider $E$ defined as follows.
\begin{align*}
  & E    =  a_0 \then E_0  \choice a_1 \then \STOP \\
  & E_0  =  a_0 \then E_0
\end{align*}

Let $M = a_0 \then M$. As $E$ and $M$ may synchronize only on $a_0$, the composition $E \parallel M$ has only the single infinite computation $E\parallel M \trans{a_0} E_0 \parallel M \trans{a_0} E_0 \parallel M \ldots$, which trivially satisfies the specification. Hence $M$ is a solution to this coordination problem. 

 \begin{figure}[t]
 	\centering
 	\begin{minipage}{0.35\textwidth}
 		\centering
 		\begin{tikzpicture}[shorten >=1pt,node distance=2cm,on grid,auto] 
 		
 		\node[state,initial] (q)   { $E$}; 
 		\node[state] (q_0) [above right=of q] {$E_0$}; 
 		\node[state] (q_1) [right=of q] {$E_1$}; 
 		
 		\path[->] 
 		(q)   
 		edge node {\footnotesize $a_0$} (q_0)
 		edge node {\footnotesize $a_1$} (q_1)

 		(q_0) edge  [loop right] node { $a_0$} ()

 		(q_1) edge  [loop right] node { $b$} ();
 		
 		\end{tikzpicture}
 		\caption{Example 1. Public actions $a_0$, $a_1$; Private action $b$}
 		\label{Fig:Example1}
 	\end{minipage}
 	\hfill
 	\centering
 	\begin{minipage}{0.25\textwidth}
 		\centering
 		\begin{minipage}{\textwidth}
 			\centering
 			\begin{tikzpicture}[shorten >=1pt,node distance=1cm,on grid,auto] 
 			
 			\node[state,initial] (q)   {  $M$};

 			\path[->] 
 			(q)   
 			(q) edge  [loop right] node {  $a_0$} ();
 	
 			\end{tikzpicture}
 			\caption{Coordinator $M$}
 			\label{Fig:CP1}
 		\end{minipage}
 	
 	\end{minipage}
 	\hfill
 	\centering
 	\begin{minipage}{0.35\textwidth}
 		\centering
 		\begin{minipage}{\textwidth}
 			\centering
 			\begin{tikzpicture}[shorten >=1pt,node distance=2cm,on grid,auto] 
 			
 			\node[state,initial] (q)   { \footnotesize $E||M$}; 
 			\node[state] (q_1) [right=of q] {\footnotesize  $E_0||M$}; 
 			
 			\path[->] 
 			(q)   
 			edge node {\footnotesize $a_0$} (q_1)

 			(q_1) edge  [loop above] node { $a_0$} ();
 			
 			\end{tikzpicture}
 			\caption{$E||M$ satisfies  $\Eventually\Always(\lNot b)$ }
 			\label{Fig:Composition1}
 		\end{minipage}

 	\end{minipage}
 \end{figure}

\paragraph{Example 1: Realizability.}
Consider $E$ defined as follows (Fig~\ref{Fig:Example1}).
\begin{align*}
  & E    =  a_0 \then E_0  \choice a_1 \then E_1 \\
  & E_0  =  a_0 \then E_0 \\
  & E_1  =  b \then E_1
\end{align*}

The self-loop of $b$-actions at $E_1$ violates the specification. Defining $M = a_0 \then M$ (Fig~\ref{Fig:CP1}) ensures that the composed system $E\parallel M$ has a single joint computation, $E\parallel M \trans{a_0} E_0 \parallel M \trans{a_0} E_0 \parallel M \ldots$ (Fig~\ref{Fig:Composition1}), that entirely avoids the state $E_1$. As in Example 0, the structure of $M$ guides the system away from an execution that would violate the specification. 
 
\paragraph{Example 2: Unrealizability.}
Let $E$ be defined as follows.
\begin{align*}
  & E    = a_0 \then E_0 \choice a_0 \then E_1 \\
  & E_0  = a_0 \then E_0 \\
  & E_1  = b \then E_1
\end{align*}
In this case, the coordination problem has no solution. Assume, to the contrary, that $M$ is a solution. Then $M$ must have a transition on $a_0$, otherwise the system $E \parallel M$ deadlocks. Denote the successor of this transition as $M_0$. But then $E\parallel M$ has the infinite joint computation $E \parallel M \trans{a_0} E_1 \parallel M_0 \trans{b} E_1 \parallel M_0 \trans{b} \ldots$ on which $b$ occurs infinitely often, violating the specification.

\paragraph{Fairness for CSP Programs.}
A natural notion of fairness for a group of CSP processes, analogous to strong fairness, considers a computation to be \emph{unfair} if there is a pair of processes that are ready to interact at infinitely many points along the computation, but do so only finitely many times. The assumption of fairness can make a synthesis problem easier to solve, as it limits the set of computations that must be examined for violations of the specification. Unfortunately, the joint computation of $E\parallel M$ constructed in the impossibility proof in Example 2 is fair, as $E$ and $M$ can never interact once the environment is in state $E_1$, which only offers the private action $b$. So the coordination problem in Example 2 has no solution even under fairness.

\paragraph{Example 3: Realizability under fairness.}
Let $E$ be defined as follows.
\begin{align*}
  & E    =  a_0 \then E_0 \choice a_0 \then E_1 \\
  & E_0  =  a_0 \then E_0 \\
  & E_1  =  b \then E_1  \choice a_0 \then E_0
\end{align*}
Let $M = a_0 \then M$. Then $E\parallel M$ has a computation which violates the specification, where the environment enters state $E_1$ and then loops forever on the $b$ action. That computation is unfair, however, as it is possible throughout for $M$ and $E_1$ to synchronize on $a_0$, but such an interaction never occurs. All other computations of $E \parallel M$ satisfy the specification. So the synthesis problem is realizable under fairness, with $M$ as a solution.

\paragraph{Example 4: Partial Knowledge: Non-deterministic Interface.}
Define $E$ as follows.
\begin{align*}
  & E = a_0 \then E_0 \choice a_0 \then E_1 \\
  & E_0 = a_0 \then E_0 \\
  & E_1 = a_1 \then E_1
\end{align*}
Any coordinator must have an initial synchronization on $a_0$, otherwise the system deadlocks. After performing $a_0$, the environment may be either in state $E_0$ or in state $E_1$. In the first state, only action $a_0$ is enabled, while in the second, only $a_1$ is enabled. Hence, a coordinator must be structured to synchronize on \emph{either} of these two actions: if it offers to synchronize only on $a_0$, then if the environment is actually in state $E_1$, the system will deadlock; and vice-versa if it offers only $a_1$. Any synthesis algorithm must resolve such situations where the coordinator has only partial knowledge of the environment state. In this case, the process $M = a_0 \then M_0$ where $M_0 = a_0 \then M_0 \choice a_1 \then M_0$ is a solution.

\paragraph{Example 5: Partial Knowledge: Hidden Actions.}
Consider $E$ defined as follows.
\begin{align*}
  & E = a_0 \then E_0 \\
  & E_0 = b \then E
\end{align*}
This problem is unrealizable. Assume, to the contrary, that $M$ is a solution. To avoid deadlock, $M$ must synchronize on action $a_0$. Let $M_0$ denote the successor state. Note that $E$ synchronously evolves to $E_0$ on $a_0$. To avoid deadlock, it must again be possible to synchronize on $a_0$ from $M_0$. Proceeding in this manner, one  constructs an infinite computation of $E \parallel M$ with infinitely many $a_0$ actions; but this computation must also have infinitely many $b$ actions in-between successive $a_0$ actions, so it violates the specification. A synthesis algorithm must, therefore, track hidden as well as interface actions. 

\section{Background}
\label{Sec:Prelims}

\subsection{Communicating Sequential Processes (CSP)}

CSP has a rich notation and extensive algebraic theory~\cite{DBLP:books/ph/Hoare85,Roscoe-TPC-1997}. As described previously, we use only the basic ``flat'' format of top-level concurrency between processes defined as state machines. A {\em CSP process}, or process (in short)  is defined by a tuple $P = (\State, \Start, \Public, \Private, \delta)$, where
$\State$ is a finite set of states,
$\Start \in \State$ is  a special start state,
$\Public$ are the publicly visible actions  of the process,
and $\Private$ are the privately visible actions of the process. The sets $\Public$ and $\Private$ are disjoint. The transition relation $\delta: \State \times (\Public \cup \Private) \rightarrow 2^\State$ maps each state and action to a set of successor states. A transition from state $s$ on action $a$ to state $t$ exists if $t \in \delta(s,a)$; it is also written as $(s,a,t)$ or sometimes as $s\xrightarrow{a} t$.

A process is {\em deterministic} if for all state $s$ and all actions $ e$, $|\delta(s, e)|\leq 1$, and {\em non-deterministic} otherwise. A public action $a$ is said to be {\em enabled} at state $s$ if there is a transition on the action $a$ from state $s$.


An \emph{execution}, $\pi$, of a process from state $s_0$ is an alternating sequence of 
states and actions $\pi = s_0,a_0,s_1,a_1,\ldots$,
%
%
ending at a state if finite, such that $s_{i+1} \in \delta(s_i,a_i)$ for every $i$ (where $i \geq 0$ and $i < n-1$ if there are $n$ states on the sequence).  
%
%
The sub-sequence $a_0,a_1,\ldots$ of actions is the \emph{trace} of the execution, denoted $\trace(\pi)$ for an execution $\pi$. A \emph{computation} is an execution from the initial state. It is \emph{maximal} if either it is infinite, or it is finite, and no transitions are enabled from the last state of the computation. 
A state $s$ is  \emph{reachable} if there is a finite computation ending at $s$. 

\paragraph{Process Composition and Interaction.}
Let $P$ and $Q$ be CSP processes. Let $X$ be a subset of their common public actions, i.e.,
%
%
$X \subseteq (\Public_P \Inter \Public_Q)$. The composition of $P$ and $Q$
%
%
relative to $X$, denoted $P\parallel_{X} Q$, is a CSP process, with state set $\State_P \times \State_Q$,  initial state $(\Start_P,\Start_Q)$, public actions $(\Public_P \Union \Public_Q) \Diff X$, private actions $(\Private_P \Union \Private_Q \Union X)$, and a transition relation defined by the following rules. 
\begin{itemize}
\item (Synchronized) For an action $a$ in $X$, there is a transition from $(s,t)$ to $(s',t')$ on $a$ if $(s,a,s')$ is a transition in $P$ and $(t,a,t')$ a transition in $Q$. 
\item (Unsynchronized) For an action $b$ in $\Private_P$ or in $\Public_P \Diff X$ (i.e., private, or unsynchronized public action), there is a transition from $(s,t)$ to $(s',t)$ on $b$ if there is a transition $(s,b,s')$ in $P$. A similar rule applies to $Q$.
\end{itemize}

The definition forces processes $P$ and $Q$ to synchronize on actions in $X$; for other actions, the processes may act independently. For simpler notation, we write $\parallel$ instead of $\parallel_{X}$ when $X$ equals $\Public_P \Inter \Public_Q$. 

\paragraph{Fairness.}
There are several ways of defining fairness in CSP (cf.~\cite{DBLP:books/daglib/0067978}). We use the following notion, analogous to strong fairness. Let system $S$ be defined as the parallel composition of a number of processes $S_1, S_2, \ldots, S_n$. A computation of $S$ is \emph{unfair} if there is a pair of processes $S_i,S_j$ (for distinct $i,j$) such that some interaction between $S_i$ and $S_j$ is enabled at infinitely many points along the computation, but the computation contains only finitely many interactions between $S_i$ and $S_j$. Intuitively, a computation is unfair if a system scheduler ignores a pairwise process interaction even if it is infinitely often possible.

\subsection{Linear Temporal Logic} 

We formulate {\em Linear Temporal Logic (LTL)}~\cite{pnueli1977temporal} over an alphabet $\Sigma$, using the syntax~$\varphi ::= \True $ $|\ a \in \Sigma \ |\ \neg \varphi\ |\ \varphi_1\wedge\varphi_2\ |\ \Next \varphi\ |\ \varphi_1 \Untill \varphi_2$.  
The temporal operators are $\Next$(Next) and $\Untill$(Until). The LTL semantics is standard. For an infinite sequence $\pi$ over $\Sigma$, the satisfaction relation, $\pi, i \models \varphi$ for a natural number $i$, is defined as follows:
\begin{itemize}
\item $\pi,i \models \True$ holds for all $i$;
\item $\pi,i \models a \in \Sigma$ if $\pi(i)=a$;
\item $\pi,i \models \neg \varphi$ if $\pi, i \models \varphi$ does not hold;
\item $\pi,i \models \varphi_1 \wedge \varphi_2$ if $\pi, i \models \varphi_1$ and $\pi, i \models \varphi_2$;
\item $\pi,i \models \Next \varphi$ if $\pi, i+1 \models \varphi$; and
\item $\pi,i \models \varphi_1 \Untill \varphi_2$ if there is $j: j \geq i$ such that $\pi, j \models \varphi_2$ and for all $k \in [i,j)$, it is the case that $\pi, k \models \varphi_1$.
\end{itemize}
The Boolean constant $\False$ and  Boolean operators are defined as usual. Other temporal operators are also defined in the standard way: $\Eventually \varphi$ (``Eventually $\varphi$'') is defined as $\True \Untill \varphi$; $\Always \varphi$ (``Always $\varphi$'') is defined as $\neg \Eventually (\neg \varphi)$. For an LTL formula $\varphi$, let $\Lang(\varphi)$ denote the set of infinite sequences $\pi$ over $\Sigma$ such that $\pi,0 \models \varphi$. In the standard formulation of LTL, the alphabet $\Sigma$ is the powerset of a set of ``atomic propositions''.

\subsection{Non-deterministic B\"uchi and Universal co-B\"uchi Automata}
LTL formulas can be turned into equivalent {\em B\"uchi automata}, using standard constructions (e.g.,~\cite{BKRS12}). A {\em B\"uchi automaton}, $A$, is specified by the tuple $(Q,Q_0,\Sigma,\delta,G)$, where $Q$ is a set of states, $Q_0 \subseteq Q$ defines the initial states, $\Sigma$ is the alphabet, $\delta \subseteq Q \times \Sigma \times Q$ is the transition relation, and $G \subseteq Q$ defines the ``green'' (also known as ``accepting'') states. A \emph{run} $r$ of the automaton on an infinite word $\sigma=a_0,a_1,\ldots$ over $\Sigma$ is an infinite sequence $r=q_0,a_0,q_1,a_1,\ldots$ such that $q_0$ is an initial state, and for each $k$, $(q_k,a_k,q_{k+1})$ is in the transition relation. Run $r$ is accepting if a green state appears on it infinitely often; the language of $A$, denoted $\Lang(A)$, is the set of words that have an accepting run in $A$.

A {\em universal co-B\"uchi automaton}, $U$, is also specified by the tuple $(Q, Q_0, \Sigma, \delta, G)$. In this case, green states are also known as ``rejecting" states. A {\em run r} of the automaton on an infinite word $\sigma$ over $\Sigma$ is defined as before. The difference arises in the definition of acceptance: run $r$ is accepting if a green state appears on it \emph{finitely many times}. The language of $U$, denoted $\Lang(U)$, is the set of words for which \emph{all} runs are accepting in this sense.

The complement of the language of a B\"uchi automaton can be viewed as the language of a \emph{universal co-B\"uchi} automaton with the identical structure. Thus, every LTL formula has an equivalent universal co-B\"uchi automaton.

\subsection{Temporal Synthesis}
\label{sec:prelims-synth}

In the standard (synchronous) formulation of temporal synthesis, the goal is to generate a deterministic reactive program $M$ that transforms inputs from domain $\I$ to outputs from domain $\O$. Such a program can be represented as a function $f: \I^{*} \to \O$. The program $M$ (or its functional form $f$) is viewed as a generator of an output sequence in response to an input sequence. For an infinite input sequence $x=x_0,x_1,\ldots$, the generated output sequence $y=y_0,y_1,\ldots$ is defined by: $y_{i} = f(x_0,\ldots,x_{i-1})$, for all $i \geq 0$. (In particular, $y_0$ is the value of $f$ on the empty sequence.) The function $f$ can also be viewed as a strategy for Alice in a turn-based two-player game where Bob chooses the input sequence, while Alice responds at the $i$-th step with the value $y_i$ defined on the history of the play so far. Note that Alice plays first in each step of the game.

Pnueli and Rosner~\cite{pnueli1989synchsynthesis} represent $f$ equivalently as a labeled infinite full-tree, a form that we also use to formulate a solution to the coordination synthesis problem. A full-tree over $\I$ is the set $\I^{*}$. Each finite string over $\I$ represents a node of the tree: the root is $\epsilon$, and for $\sigma \in \I^{*}$ and $a \in \I$, the string $\sigma;a$ is the $a$-successor of the node $\sigma$. A labeling of the tree is a function $\mu: \I^{*} \to \O$ that maps each node of the tree to an output value. It is easy to see that the function $f$ defines a labeling, and vice-versa.

The advantage of this view is that the set of deterministic reactive programs is isomorphic to the set of labeled full-trees. Thus, given an LTL specification $\varphi$ over the alphabet $(\I \times \O)$, Pnueli and Rosner solve the LTL synthesis question (following Rabin~\cite{rabin69}) by constructing a tree automaton that accepts only those labeled trees that represent programs that are solutions, and checking whether this automaton has a non-empty language. A labeled tree represents a solution if it is a full tree and if for every input sequence $x=x_0,x_1,\ldots$, the labels $y=y_0,y_1,\ldots$ on the path in the tree defined by $x$ are such that the sequence of pairs $(x_0,y_0), (x_1,y_1), \ldots$ satisfies $\varphi$. 

The formulation of $M$ (or $f$) here corresponds to a Moore machine; there is an analogous formulation for Mealy machines. A specification is said to be \emph{realizable} if there is a program that satisfies it; it is \emph{unrealizable} otherwise.

\subsection{Solution Methods for Temporal Synthesis}
\label{sec:temporalsynth}
We have already seen the tree-automaton view of the synthesis question.
This has not turned out to be a viable approach for LTL synthesis in practice, however, as the construction of a tree automaton involves complex constructions for determinization and complementation of automata on infinite words. 

In practice, tools such as BoSy~\cite{bosy2017tool} and Acacia+~\cite{bohy2012acacia+} adopt a different strategy called {\em bounded synthesis}~\cite{schewe2007bounded,DBLP:conf/cav/FiliotJR09}. The essential idea is to pick a bound, $N$, on either the number of states in a solution, or the number of visits to a rejecting state in a universal co-B\"uchi automaton. The tool Acacia+ uses this bound to formulate the problem as a \emph{safety} game, and searches for a winning strategy for Alice. In BoSy, the search is for a transition relation for $M$ that has at most $N$ states.

We summarize the BoSy search here, as we use BoSy as the back-end synthesis tool in our implementation. Given a universal co-B\"uchi automaton representing the LTL specification $\varphi$ over alphabet $\I \times \O$ and state space $Q$, and a bound $N: N > 0$ on the number of states of $M$, the search is for a deterministic Moore machine with state space $S=\{0\ldots(N-1)\}$. The synthesis question reduces to whether there exist (1) a transition function $T: S \times \I \to S$, (2) an output function $O: S \to \O$, (3) an inductive invariant $\theta \subseteq Q \times S$, and (4) a rank function $\rho: Q \times S \to [N]$, such that these objects satisfy the standard deductive verification proof rule for universal automata~\cite{DBLP:conf/popl/MannaP87}.

As described in~\cite{FFRT17}, these constraints can be encoded and solved in various ways, either as a propositional satisfiability (SAT) problem, or as a quantified boolean constraint (QBF) problem. The bounded synthesis approach is complete, as there is a known (worst-case exponential) limit on the size of a model for a specification automaton. 


\section{Problem Formulation}
\label{Sec:Model}

\subsubsection*{Specifications}
%
%

The maximal computations of a CSP process may either be finite, ending in a state with no successor, or are infinite. The semantics of LTL is defined only over infinite sequences of actions, but a correctness specification should accommodate both types. To do so, we define a specification $\varphi$ over an action alphabet, say $\kappa$, as a pair $(\varphi_S,\varphi_L)$, where $\varphi_S$ is a set of finite sequences over $\kappa$, specified, say, as a regular expression, and $\varphi_L$ is a set of infinite sequences over $\kappa$, specified in LTL. 

A maximal computation of a CSP process satisfies $\varphi$ in one of two ways:
\begin{itemize}
\item The projection $\pi$ of its trace on $\kappa$ is finite and is in $\varphi_S$, or
\item The projection $\pi$ of its trace on $\kappa$ is infinite and belongs to $\varphi_L$.
\end{itemize}
Note that an infinite computation could have a trace whose projection on $\kappa$ is finite.

To formulate a synthesis algorithm, we assume that the \emph{complement} sets of $\varphi_S$ and $\varphi_L$ are definable by finite automata over $\kappa$. For $\varphi_S$, this is a finite automaton over finite words, while for $\varphi_L$, this is a finite B\"uchi automaton over infinite words in $\kappa$.

\subsubsection*{Coordination Synthesis}
The synthesis problem has been explained informally in prior sections; it is defined precisely as follows. 

\begin{definition}[Coordination Synthesis Problem]
Given an environment process $E = (\State, \Start, \Public, \Private, \delta)$ and a specification $\varphi$ over actions in $(\Public \Union \Private)$, construct a process $M$ (if one exists) with public action set $\Public$ such that all of the maximal finite computations of $E \parallel_{\Public} M$ satisfy $\varphi_S$ and all of its infinite \emph{fair} computations satisfy $\varphi_L$. 
\end{definition}

A problem instance $(E,\varphi)$ is \emph{realizable} if there is a process $M$ meeting the requirements stated above; it is \emph{unrealizable} otherwise.

A process $M$ is {\em non-blocking} for process $E$ if all maximal computations of $E \parallel M$ are infinite; i.e., the composition is free of deadlock. By setting $\varphi_S$ to the empty set, every solution to the problem instance $(E,\varphi)$ is non-blocking, as no maximal finite trace can satisfy $\varphi_S=\emptyset$.

\subsubsection*{Restricting the solution space}

This problem formulation allows a solution $M$ to be non-deterministic and have its own set of internal actions, distinct from $\Public$ and $\Private$
(although the specification cannot, of course, mention those actions). A CSP process can exhibit two types of non-determinism: \emph{internal nondeterminism}, where a state has more than one possible successor on a private action, and \emph{external nondeterminism}, where a state has more than one possible successor on a public action. Intuitively, the first defines a choice that is always enabled and is to be resolved by the process itself, while the second is a choice that is enabled only through synchronization with an external process.

The theorem below shows that there is no loss of generality in restricting the search to deterministic processes without internal actions. This is a simple but important observation, as the behavior of a deterministic process can be viewed as a tree, and automata-theoretic operations on such trees are the basis of our solution to the coordination synthesis problem, as described in the following section.


%

\begin{theorem} \label{thm:deterministic} 
  A synthesis instance $(E,\varphi)$ is realizable if, and only if, it has a deterministic solution process that has no internal actions. 
\end{theorem}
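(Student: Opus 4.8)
The reverse implication is immediate: a deterministic solution without internal actions is, in particular, a solution, so $(E,\varphi)$ is realizable. The plan is therefore to establish the forward direction, transforming an arbitrary solution $M$ into a deterministic one with no internal actions. I would exploit the fact that the only actions $\varphi$ can observe lie in $\Public\cup\Private$, so that $M$'s internal actions and the resolution of $M$'s nondeterminism are invisible to the specification once its trace is projected onto $\Public\cup\Private$. The target object is exactly a labeled full tree in the sense of Section~\ref{sec:prelims-synth}: a deterministic internal-action-free coordinator over $\Public$ is nothing but a function $g:\Public^{*}\to 2^{\Public}$ assigning to each observed history of synchronized actions the set of public actions it next offers, since in the absence of internal actions $M$'s state is a function of the public history alone.

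\emph{Phase 1 (eliminate nondeterminism).} First I would prune $M$ to a deterministic process, retaining a single successor for each public action and each internal action at each state, and, where several distinct internal actions are enabled, a single one. Correctness of pruning rests on two observations. Since $M$ is a solution, \emph{every} maximal computation of $E\parallel M$ is good, so every branch that might be discarded was already good; the pruned computations form a subset of the original ones and hence still satisfy $\varphi$. Moreover, pruning introduces no new deadlock, because for external nondeterminism the synchronizing action is retained and whenever several internal actions compete at least one is kept, so no reachable state that could proceed becomes blocked; and it preserves the fairness status of every computation, because fairness refers only to which \emph{inter-process interactions} (shared actions) are enabled, and these are unaffected by redirecting a successor or dropping a solo internal move.

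\emph{Phase 2 (eliminate internal actions).} This is the crux. Starting from the deterministic $M$ of Phase~1, each state $m$ has a unique internal continuation $m=m^{0}\to m^{1}\to\cdots$, which, as $M$ is finite, either reaches a state with no internal action or cycles. I would collapse this continuation into a single internal-action-free state $\hat{m}$ whose offered set is the \emph{union} of the public actions enabled along the continuation, with the successor on a public action $a$ taken to be $\hat{q}$ for a chosen $M$-successor $q$ reached on $a$ somewhere along the continuation. The union is essential: because the scheduler may let $M$ advance internally before synchronizing, every synchronization the union permits corresponds to a genuine, and hence good, computation of $E\parallel M$, while offering anything smaller could create a deadlock that $M$ avoided by synchronizing at an intermediate internal state. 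Since internal moves are invisible to the projection onto $\Public\cup\Private$, each computation of the collapsed system lifts to one of $E\parallel M$ with the \emph{same} projected trace, which transfers satisfaction of $\varphi_S$ and $\varphi_L$.

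The main obstacle lies in the infinite internal loops of Phase~2, which in $M$ are infinite computations whose projection is finite. I would argue that the union-offer handles these exactly. If such a loop shares a public action with the current environment state, then that interaction is enabled infinitely often yet never taken, so the loop is \emph{unfair} in $M$ and thus unconstrained there; meanwhile the union-offer lets the collapsed coordinator synchronize on the shared action and make progress, matching a different, good computation of $M$. If instead the loop shares no action with the environment, the collapsed coordinator is genuinely deadlocked, but then the loop was a \emph{fair} infinite computation of $E\parallel M$ with finite projection, which a solution must already place in $\varphi_S$, so the resulting finite maximal computation is good. A careful verification that the lifting preserves the fairness status of infinite computations---using again that $M$'s internal moves are solo and that synchronizations correspond one-to-one---completes the argument. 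The collapsed process is deterministic and internal-action-free, so by the observation in the first paragraph it is the sought function $g:\Public^{*}\to 2^{\Public}$ and is a solution, which proves the forward direction.
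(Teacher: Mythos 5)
Your proposal is correct and follows essentially the same route as the paper's proof: the paper likewise decomposes the argument into (i) eliminating the solution's internal actions by absorbing private paths into single transitions---so that a state's offer effectively becomes the union of offers reachable through internal moves---and (ii) eliminating external nondeterminism by simply restricting to one successor per state and action, justifying both steps exactly as you do, via a same-projected-trace lifting of computations (including the subtle case where a finite maximal computation lifts to an infinite one ending in a tail of private moves) and the subset argument for pruning. The only difference is the order of the two stages (the paper removes internal actions first, then prunes nondeterminism), which does not change the key ideas or the correctness obligations.
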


The full proof is in~\cite{coordination-synthesis-tech-report}. Given a solution $M$, the proof constructs a solution $M''$ that has the required properties. It proceeds in two stages. In the first stage, a solution $M'$ is constructed from $M$ by hiding any internal actions of $M$ -- i.e., $(s,a,t)$ is a transition of $M'$ if, and only if, there is a path in $M$ with the trace $\beta;a;\beta'$ from $s$ to $t$, where $\beta, \beta'$ are sequences of internal actions of $M$. As the internal actions of $M$ cannot be referenced in $\varphi$, this is a valid reduction. In the second stage, external non-determinism is removed from $M'$. This could be done by determinizing $M'$ using the standard subset construction, but we also provide a simpler construction that does not incur the worst-case exponential blowup; it merely restricts transitions so that at any state $s$ and any action $a$, at most one of the transitions of $M'$ on $a$ from $s$ is retained. By construction, the resulting $M''$ is externally deterministic and has no internal actions. As $M''$ is simulated by $M'$ while preserving enablement of transitions, $E \parallel M''$ satisfies $\varphi$. 


\section{Tree Views and Automaton-Theoretic Synthesis\label{sec:treeview}}

This section formulates the automaton-based synthesis procedure. As described in Section~\ref{sec:prelims-synth}, the idea is to formulate conditions under which a labeled fulltree defines a solution to the synthesis question. However, one cannot simply reuse the results of Pnueli and Rosner, as the coordination synthesis question has a distinct formulation from the input-output form they consider. The theorems in this section pin down the conditions for validity and show how to represent them as automata.

In a nutshell, the technical development proceeds as follows.
\begin{enumerate}
\item The consequence of Theorem~\ref{thm:deterministic} is that it suffices to limit a solution $M$ to be externally deterministic and free from internal actions. The semantics of such a process can be represented as an infinite labeled fulltree, where each node is labeled with a \emph{set} of public actions and each edge with a public action. 

\item  We show that a labeled fulltree is \emph{invalid} --i.e., it does not represent a solution $M$-- if and only if it has an infinite path satisfying a linear-time property derived from the specification $\varphi$ and the environment process $E$. 

\item We show that this failure property may be represented by a non-deterministic B\"uchi automaton, $\mathcal{B}$, that is (roughly) the product of $E$ with an automaton, $\mathcal{A}$, for the \emph{negation} of $\varphi$. Partial knowledge and asynchrony are both handled in this construction.

\item The \emph{valid} fulltrees are, therefore, those where every path in the tree is accepted by the complement of $\mathcal{B}$. We complement $\mathcal{B}$ implicitly, avoiding a subset construction, by viewing the same structure as a \emph{universal} complemented-B\"uchi automaton. The automaton $\mathcal{B}$ has a number of states that is \emph{linear} in the number of states of $E$ and of $\mathcal{A}$. Its transition relation, however, is of size \emph{exponential} in the number of public interface actions.

\item We give a fully symbolic construction of $\mathcal{B}$ to ameliorate the exponential blowup in the size of its transition relation. The symbolic $\mathcal{B}$ is in a form that can be solved by a number of temporal synthesis tools. The constructed solution (if one exists) is in the form of a Moore machine. We show how to transform this back to a CSP process representing the coordinator $M$ (Section~\ref{Sec:Symbolic}).

\item We prove a complexity hardness result on the scalability of the synthesis problem, showing that it is at least \PSPACE-hard in the size of the environment. It is well known that the synthesis problem is \TWOEXPTIME-complete in the size of the LTL specification~\cite{pnueli1989synchsynthesis}. 
\end{enumerate}

The first few steps explain the issues in terms of fulltrees, which we find easier to follow; the final synthesis procedure \emph{is not based} on tree-automaton constructions, instead it uses co-B\"uchi word-automaton constructions, which are supported by current tools.

\subsection{Labeled Fulltrees and Deterministic CSP Processes}

A \emph{tree} $t$ over alphabet $\Sigma$ is a prefix-closed subset of $\Sigma^{*}$. A \emph{node} of tree $t$ is an element of $\Sigma^{*}$. For a set $\cL$ of \emph{labels}, a $\cL$-labeled tree is a pair $(t,\mu)$ where $\mu: t \to \cL$. I.e., $\mu$ assigns to each node of the tree an element of $\cL$, its label.
%
%
The \emph{full-tree} over $\Sigma$ is the set $\Sigma^{*}$. For a deterministic CSP process $M$, one can generate a full $\Sigma$-tree labeled with $2^{\Sigma}$ as follows. The labeling function, which we denote $\mu_M$, assigns to tree node $\sigma$ the label defined as follows. By determinism, there is at most one computation of $M$ with trace $\sigma$. If this computation exists, let the label $\mu_M(\sigma)$ be the set of actions that are enabled at the state at the end of the computation. Otherwise, let $\mu_M(\sigma)$ be the empty set. We refer to this labeled tree as $\fulltree(M)$.

Conversely, given a full $\Sigma$-tree $(t,\mu)$ labeled with $2^{\Sigma}$, one can extract a deterministic (infinite-state) CSP process, $P=\proc(t,\mu)$, as follows. The state space of $P$ is the set of tree nodes. The initial state is the node $\epsilon$. A triple $(\sigma,a,\delta)$ is in the transition relation of $P$  iff  $a \in \mu(\sigma)$ and $\delta=\sigma;a$. 

%
%

\begin{theorem}
  \label{thm:tree-bisim}
For a deterministic CSP process $M$, the process $M' = \proc(\fulltree(M))$ is bisimular to $M$.
\end{theorem}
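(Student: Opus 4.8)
The plan is to exhibit an explicit bisimulation relation and verify the two standard transfer conditions; determinism of $M$ is exactly what makes the construction well-defined and the conditions go through. Since $M$ is deterministic, for every trace $\sigma \in \Public^*$ there is at most one computation of $M$ with that trace, and I write $\mathrm{reach}(\sigma)$ for the state at its end when it exists. The relation I would use is
\[
  R = \{(\sigma, s) : \sigma \in \Public^*,\ \mathrm{reach}(\sigma)\text{ exists and equals }s\},
\]
regarded as a relation between the states of $M' = \proc(\fulltree(M))$ (whose states are the tree nodes $\Public^*$) and the states of $M$. The initial states are related, since the empty trace $\epsilon$ leads to $\Start$ in $M$, giving $(\epsilon, \Start) \in R$.

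For the forward (zig) condition, suppose $(\sigma, s) \in R$ and $M'$ has a transition $\sigma \trans{a} \sigma'$. By the definition of $\proc$, this means $\sigma' = \sigma;a$ and $a \in \mu_M(\sigma)$. By the definition of the labeling $\mu_M$, membership $a \in \mu_M(\sigma)$ holds exactly when $a$ is enabled at $\mathrm{reach}(\sigma) = s$, so $M$ has a transition $s \trans{a} s'$, and by determinism the successor $s'$ is unique. The computation of $M$ with trace $\sigma;a$ then ends at $s'$, so $(\sigma', s') \in R$. The backward (zag) condition is symmetric: if $(\sigma, s) \in R$ and $s \trans{a} s'$ in $M$, then $a$ is enabled at $s = \mathrm{reach}(\sigma)$, hence $a \in \mu_M(\sigma)$, so $M'$ has the transition $\sigma \trans{a} \sigma;a$; determinism again forces the trace-$(\sigma;a)$ computation of $M$ to end at $s'$, yielding $(\sigma;a, s') \in R$.

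The one point needing care -- though it is not truly an obstacle -- is the status of the ``dead'' tree nodes of $M'$. A node $\sigma$ for which $\mathrm{reach}(\sigma)$ is undefined carries label $\mu_M(\sigma) = \emptyset$ and therefore has no outgoing transitions in $M'$, and such a node is not in the domain of $R$. I would observe that these are precisely the \emph{unreachable} states of $M'$: a node is reachable from $\epsilon$ in $M'$ iff every one-step extension along its prefixes is enabled, which by the argument above holds iff the corresponding $M$-computation exists. Hence $R$ is total on the reachable fragments of both processes, so the relation is a genuine bisimulation and $M'$ is bisimilar to $M$. I would emphasize that the determinism hypothesis is used twice and is indispensable: it guarantees that $\mathrm{reach}$ is single-valued (so $\fulltree(M)$ and its labeling are unambiguous) and that the matched successors are unique on each side.
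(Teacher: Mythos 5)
Your proof is correct and follows essentially the same route as the paper's: you define the identical relation (a tree node $\sigma$ is related to the unique state of $M$ reached by the computation with trace $\sigma$) and verify the same two transfer conditions using the definition of $\mu_M$ and determinism. Your additional remark that the unrelated ``dead'' nodes of $M'$ are exactly its unreachable states is a sound clarification of a point the paper leaves implicit, but it does not change the argument.
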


\begin{corollary}
  \label{cor:fulltrees}
  The coordination synthesis instance $(E,\varphi)$ is realizable iff there is a full labeled $\Public$-tree $(t,\mu)$, labeled by $2^{\Public}$,  such that $\proc(t,\mu)$ is a solution.
\end{corollary}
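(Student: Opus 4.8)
The plan is to prove Corollary~\ref{cor:fulltrees} as a direct consequence of Theorem~\ref{thm:deterministic} together with Theorem~\ref{thm:tree-bisim}. The statement is an equivalence, so I would establish each direction in turn, using the two theorems as bridges between the three representations of a coordinator: an arbitrary CSP process, a deterministic internal-action-free CSP process, and a labeled full $\Public$-tree.

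For the forward direction, suppose $(E,\varphi)$ is realizable. By Theorem~\ref{thm:deterministic}, there is then a solution $M$ that is deterministic and has no internal actions. Since $M$ is deterministic, the labeling $\mu_M$ described before Theorem~\ref{thm:tree-bisim} is well-defined, yielding the labeled full $\Public$-tree $\fulltree(M) = (t,\mu_M)$, whose labels lie in $2^{\Public}$ by construction. I would then set $(t,\mu) := \fulltree(M)$ and argue that $\proc(t,\mu)$ is a solution. The key step here is Theorem~\ref{thm:tree-bisim}, which gives that $\proc(\fulltree(M))$ is bisimilar to $M$. Because realizability is defined through the computations of the composition $E \parallel M$ --- deadlock-freedom and satisfaction of $\varphi_S$ on maximal finite computations and $\varphi_L$ on infinite fair ones --- and bisimilar processes induce the same set of traces in composition with $E$, the process $\proc(t,\mu)$ inherits the solution property from $M$. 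So the right-to-left ingredients of the tree view give a witness tree.

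For the reverse direction, suppose there is a full labeled $\Public$-tree $(t,\mu)$, labeled by $2^{\Public}$, such that $\proc(t,\mu)$ is a solution. Then $\proc(t,\mu)$ is itself a CSP process meeting the requirements in the definition of the coordination synthesis problem --- it has public action set $\Public$ and guides $E$ so that the composition is non-blocking and satisfies $\varphi$. This is immediately a witness that $(E,\varphi)$ is realizable, with no further construction needed. This direction is essentially definitional.

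The only real subtlety, and the step I would be most careful about, is the appeal to bisimilarity in the forward direction: I must confirm that the notion of ``solution'' is invariant under bisimulation in composition with the fixed environment $E$. Concretely, the claim that $M$ and $\proc(\fulltree(M))$ induce the same maximal computations of the composition with $E$ requires that bisimulation is a congruence for the CSP parallel operator $\parallel_{\Public}$ and that it preserves the enabledness information on which deadlock-freedom depends --- exactly the property that Theorem~\ref{thm:tree-bisim} is set up to supply (note in particular that the labeling $\mu_M$ records \emph{enabled} actions, so enablement, and hence deadlock behavior, is preserved). Granting that congruence property, the satisfaction of $(\varphi_S,\varphi_L)$ on finite and infinite fair computations transfers verbatim, since these are trace properties and the traces of $E\parallel M$ and $E\parallel\proc(\fulltree(M))$ coincide. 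Everything else is routine unwinding of the definitions of $\fulltree$ and $\proc$.
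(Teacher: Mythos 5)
Your proof is correct and matches the argument the paper intends: the corollary is stated without an explicit proof precisely because it follows by combining Theorem~\ref{thm:deterministic} (restrict to deterministic, internal-action-free solutions) with Theorem~\ref{thm:tree-bisim} (bisimilarity of $M$ and $\proc(\fulltree(M))$), plus the definitional reverse direction, exactly as you lay out. Your attention to the subtlety that bisimulation (rather than mere trace equivalence) is needed to preserve deadlock-freedom and enabledness under composition with $E$ is well placed and consistent with why the paper proves bisimilarity rather than a weaker equivalence.
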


\subsection{Recognizing Valid Fulltrees}

We focus on recognizing the kinds of fulltrees defined by Corollary~\ref{cor:fulltrees}. It is actually easier to formulate properties of a fulltree $t$ that \emph{exclude} $\proc(t,\mu)$ from being a solution. From now on, we fix a ``labeled tree'' to mean a $\Public$ fulltree labeled with $2^{\Public}$. 

A \emph{path} in a labeled tree $(t,\mu)$ is an alternating sequence $L_0;a_0;L_1;a_1;\ldots$ where for each $i$, $L_i = \mu(a_0,\ldots,a_{i-1})$ is the label of the node $(a_0,\ldots,a_{i-1})$. If the path is finite, we postulate that it must end with a label. The label at the end of a finite path $\pi$ is referred to as $\Lend(\pi)$. A path is called \emph{consistent} iff $a_i \in L_i$ for all $i$. The trace of a path $\pi=L_0;a_0;L_1;a_1;\ldots$, denoted $\trace(\pi)$, is the sequence of actions on it, i.e., $a_0;a_1;\ldots$.
%
%
Every consistent path $\pi$ in tree $(t,\mu)$ is in correspondence with a computation of $\proc(t,\mu)$. As a computation is a process in itself, we abuse notation slightly and refer to this computation as the process $\proc(\pi)$.


\begin{definition}[Full-tree violation]
%
%
We say that a labeled fulltree $(t,\mu)$ violates $(E,\varphi)$ if one of the following holds.
\begin{enumerate}[(A)]
\item There is a consistent \emph{finite} path $\pi$ in the tree such that some finite computation $\gamma$ in $E \parallel \proc(\pi)$ is maximal and $\trace(\gamma)$ is not in $\varphi_S$, or

\item There is a consistent \emph{finite} path $\pi$ in the tree such that for some infinite computation $\gamma$ of $E \parallel \proc(\pi)$, $\trace(\gamma)$ is not in $\varphi_L$, and no action from $\Lend(\pi)$ is enabled from some point on in $\gamma$, or

\item There is a consistent \emph{infinite} path $\pi$ in the tree such that for some infinite computation $\gamma$ of $E \parallel \proc(\pi)$, $\trace(\gamma)$ is not in $\varphi_L$. 
\end{enumerate}
\end{definition}


\subsubsection*{Illustrative Example}
Consider the CSP environment $E$ defined as follows.
\begin{align*}
  & E = a_0 \then STOP \choice a_0 \then E_0 \choice b_0 \then E_1 \\
  & E_0 = b_1 \then E_0 \\
  & E_1 = a_1 \then E_1
\end{align*}
The environment public actions are $a_0$ and $a_1$, and private actions are $b_0$ and $b_1$. Let $\varphi_S$ be the empty set and $\varphi_L$ be the set of sequences with infinitely many
$a_1$. Figure~\ref{Fig:CandidateTrees} depicts candidate trees for the coordinating process. The composition of tree (a) with $E$ contains a maximal finite computation with trace $a_0$, which 
violates $(E,\varphi)$ by condition (A), as $\varphi_S=\emptyset$. The composition of tree (b) with $E$ results in an infinite computation with trace $a_0 (b_1)^{\omega} \notin \varphi_L$, and therefore violates $(E,\varphi)$ by condition (B), while also violating condition (A), as with tree (a). Finally, the composition of tree (c) with $E$ results in the trace $b_0 (a_1)^{\omega}$, which satisfies $\varphi_L$. Therefore tree (c) is valid for $(E,\varphi)$; it represents the solution $M = a_1\then M$.

\begin{figure}[t]
\centering
\begin{minipage}{0.3\textwidth}
	\centering
	\begin{tikzpicture}[snode/.style={rectangle, draw=black, minimum size=5mm},node distance=1.8cm]
		\node[snode] (q_0)   { {$\{a_0\}$}}; 
		\node[snode] (q_1) [below left of=q_0] {$\cdots$}; 
		\node[snode] (q_2) [below right of=q_0] {$\cdots$}; 
	
		\path[->](q_0)  
			edge node [near end,above=.1cm] {$a_0$} (q_1)
			edge node [near end,above=.1cm] {$a_1$} (q_2);	
	\end{tikzpicture}
	\caption*{(a)}
	\label{Fig:T1}
\end{minipage}
\begin{minipage}{0.3\textwidth}
	\centering
	\begin{tikzpicture}[snode/.style={rectangle, draw=black, minimum size=5mm},node distance=1.8cm]
		\node[snode] (q_0)   { {$\{a_0,a_1\}$}}; 
		\node[snode] (q_1) [below left of=q_0] {$\cdots$}; 
		\node[snode] (q_2) [below right of=q_0] {$\cdots$}; 

		\path[->](q_0)
			edge node [near end,above=.1cm] {$a_0$} (q_1)
			edge node [near end,above=.1cm] {$a_1$} (q_2);	
	\end{tikzpicture}
	\caption*{(b)}
	\label{Fig:T2}
\end{minipage}
\begin{minipage}{0.3\textwidth}
	\centering		
	\begin{tikzpicture}[snode/.style={rectangle, draw=black, minimum size=5mm},node distance=1.8cm]
		\node[snode] (q_0)   { {$\{a_1\}$}}; 
		\node[snode] (q_1) [below left of=q_0] {$\cdots$}; 
		\node[snode] (q_2) [below right of=q_0] {$\{a_1\}$}; 

		\path[->](q_0) 
			edge node [near end,above=.1cm] {$a_0$} (q_1)
			edge node [near end,above=.1cm] {$a_1$} (q_2);	
	\end{tikzpicture}
	\caption*{(c)}
	\label{Fig:T3}
\end{minipage}
\caption{Candidate trees (showing only two-levels of tree). Nodes represent
label, and $\cdots$ indicates "don't care". Tree branches correspond to public
actions.}
\label{Fig:CandidateTrees}
\end{figure}

\begin{theorem}
  Fix a synthesis instance $(E,\varphi$). For any labeled fulltree $(t,\mu)$, the CSP process $\proc(t,\mu)$ is a solution if, and only if, $(t,\mu)$ is not violating for $(E,\varphi)$.
\end{theorem}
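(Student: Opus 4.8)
The plan is to fix $M = \proc(t,\mu)$ and prove the equivalent biconditional that $M$ \emph{fails} to be a solution if and only if $(t,\mu)$ satisfies one of the violation clauses (A)--(C). Since $M$ is a solution exactly when every maximal finite computation of $E \parallel M$ has trace in $\varphi_S$ and every infinite fair computation has trace in $\varphi_L$, negating this conjunction yields two disjuncts: a ``bad finite'' witness (a maximal finite computation with trace outside $\varphi_S$) or a ``bad infinite fair'' witness (an infinite fair computation with trace outside $\varphi_L$). I will match the first disjunct to clause (A) and the second to the disjunction of clauses (B) and (C), and then assemble the two matches into the full biconditional.

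The engine of the proof is a correspondence lemma between the computations of $E \parallel M$ and the computations of $E \parallel \proc(\pi)$ as $\pi$ ranges over the consistent paths of $(t,\mu)$. In one direction, because $\proc(t,\mu)$ is by construction deterministic and free of internal actions, any computation $\delta$ of $E \parallel M$ determines a well-defined sequence of public actions; these trace a unique consistent path $\pi_\delta$ in the tree, and $\delta$ is itself a computation of $E \parallel \proc(\pi_\delta)$ with the same trace. Conversely, the transitions of $\proc(\pi)$ are transitions of $M$, so every computation of $E \parallel \proc(\pi)$ lifts to one of $E \parallel M$. I will record that this correspondence preserves traces, (in)finiteness, and --- crucially --- deadlock: a state $(e,s)$ with $s$ the terminal node of a finite $\pi$ is a deadlock in $E \parallel \proc(\pi)$ iff $E$ at $e$ enables neither a private action nor any action of $\Lend(\pi)$, which is exactly the deadlock condition at the matching state of $E \parallel M$, since $M$ there offers precisely $\Lend(\pi)$. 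The finite case then follows immediately: a maximal finite computation of $E \parallel M$ with trace outside $\varphi_S$ projects (via $\pi_\delta$, which is finite because $\delta$ has finitely many public actions) to a maximal finite $\gamma$ in $E \parallel \proc(\pi_\delta)$, which is precisely clause (A); and the lift runs in reverse.

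The infinite fair case is where fairness must be used with care, and this is the step I expect to be the main obstacle. Given a bad infinite fair $\delta$, I split on whether $\delta$ has infinitely or finitely many public actions. If infinitely many, then $\pi_\delta$ is an infinite consistent path and $\delta$ directly witnesses clause (C). If finitely many, then $\pi_\delta$ is finite with terminal label $\Lend(\pi_\delta)$, and I claim fairness forces that no action of $\Lend(\pi_\delta)$ is enabled in $E$ from some point onward: otherwise some $a \in \Lend(\pi_\delta)$ would be enabled at $E$'s state infinitely often while $M$ permanently offers $a$, so the $(E,M)$ synchronization on $a$ is enabled infinitely often yet taken only finitely often (only finitely many public actions occur), contradicting fairness --- and this yields clause (B). For the converse I must check that the lifts produced by (B) and (C) are genuinely fair: the lift of a (C)-witness contains infinitely many $E$--$M$ synchronizations, so the ``occurs only finitely often'' clause of unfairness fails outright, while the lift of a (B)-witness eventually offers no enabled common action, so no interaction is enabled infinitely often; in both cases the lifted computation is fair with trace outside $\varphi_L$, so $M$ is not a solution. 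The two delicate points to get right are the exact terminal semantics of $\proc(\pi)$ --- it must expose the full label $\Lend(\pi)$ at the end of a finite $\pi$ so that deadlock is detected faithfully rather than spuriously --- and the observation that, with $E$ and $M$ the only two composed processes, fairness concerns only the single pair $(E,M)$, so that the mere presence of infinitely many public actions already certifies fairness and drives the clean (B)/(C) split.
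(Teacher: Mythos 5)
Your proposal is correct and follows essentially the same route as the paper's proof: the same three-way matching of violation clauses to computations of $E \parallel M$ (clause (A) for maximal finite computations, clauses (B)/(C) for infinite fair computations split by whether synchronizations are finite or infinite), with fairness invoked in exactly the same two places (to certify that lifted (B)/(C) witnesses are fair, and to force the ``no enabled action from $\Lend(\pi)$ from some point on'' property when only finitely many synchronizations occur). If anything, you are more explicit than the paper about why the (B)/(C) split is exhaustive for fair computations and about the terminal semantics of $\proc(\pi)$, both of which the paper leaves implicit.
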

\begin{proof}
  We will show the contrapositive.
  
  (left-to-right)
    Suppose that $(t,\mu)$ violates $(E,\varphi)$. Then one of (A)-(C) holds. We show that process $M=\proc(t,\mu)$ cannot be a solution. 

    Suppose case (A) holds. The maximal finite computation $\gamma$ of $E \parallel \proc(\pi)$ is also a maximal computation of $E \parallel M$, as $\pi$ corresponds to a computation of $M=\proc(t,\mu)$. Since $\trace(\gamma)$ is not in $\varphi_S$, $M$ is not a solution. 

    Suppose case (B) holds. The infinite computation $\gamma$ of $E \parallel \proc(\pi)$ is also a computation of $E \parallel M$. As $\pi$ is finite, $\gamma$ has only finitely many synchronization actions. Moreover, none of the actions of $\Lend(\pi)$ are enabled on $\gamma$ from some point on. Thus, no synchronization between $M$ and $E$ is enabled on $\gamma$ from that point on, so the computation $\gamma$ is a fair computation of $E \parallel M$ whose trace is not in $\varphi_L$.

  Suppose case (C) holds. The computation $\gamma$ of $E \parallel \proc(\pi)$ is an infinite computation of $E \parallel M$, which is fair as it has infinitely many synchronizations. However, $\trace(\gamma)$ is not in $\varphi_L$.

  (right-to-left)
  Suppose $M=\proc(t,\mu)$ is not a solution for $(E,\varphi)$. We show that $(t,\mu)$ must violate $(E,\varphi)$. As $E\parallel M$ fails to satisfy $\varphi$, there are two possibilities.

  The first possibility is that of a  maximal finite computation of $E\parallel M$ whose trace is not in $\varphi_S$. This computation corresponds to a finite consistent path $\pi$ in the tree $(t,\mu)$ that meets condition (A).

  The other possibility is that there is an infinite fair computation $\gamma$ of $E\parallel M$ whose trace is not in $\varphi_L$. There are two types of fair computations.

  In the first type, $\gamma$ has only finitely many synchronization actions followed by an infinite suffix where no synchronization is enabled with $M$. As $M$ has no internal actions, the suffix of $\gamma$ consists solely of internal actions of $E$. The shortest finite prefix of $\gamma$ containing all synchronization actions defines a consistent finite path $\pi$ in $(t,\mu)$. The set of enabled actions of $M$ at the end of this prefix of $\gamma$ is (by the definition of $M=\proc(t,\mu)$) the label at the end of $\pi$. Thus, $\pi$ satisfies the requirements of condition (B).

  In the second type, $\gamma$ has infinitely many synchronization actions. This induces an infinite computation of $M$, and a corresponding consistent infinite path $\pi$ of $(t,\mu)$ meeting condition (C). 
%
\end{proof}

\subsection{Automata-Theoretic Synthesis}
\label{Sec:SpecAutomataConstruction}
We now describe how to construct B\"uchi automata that recognizes each of the properties (A)-(C) on paths of a fulltree. From these, one can construct a  \emph{universal} co-B\"uchi automaton that accepts a path of the tree either if it is inconsistent, or if it does not satisfy either of (A)-(C), called the {\em specification automaton}. In the classical setting (cf.~\cite{pnueli1989synchsynthesis}), this universal automaton is lifted to a tree automaton that runs on all paths of a tree. Conjoined with a tree automaton that checks whether the tree is a fulltree, one obtains a tree automaton that accepts the set of valid full-trees. The synthesis problem is realizable iff the language of that automaton is non-empty. In the alternative setting of bounded synthesis~\cite{schewe2013bounded-journal}, this universal co-B\"uchi automaton is used directly in the check for realizability. 

The automata for properties (A)-(C) run on infinite sequences over an input alphabet $\Public \times 2^{\Public}$. An input sequence of the form $(a_0,L_0),(a_1,L_1),\ldots$ represents the labeled fulltree path $L_0;a_0;L_1;a_1;\ldots$. The size of the automaton transition relation is thus exponential in $|\Public|$. Note that the $\Private$ actions do not appear in the input alphabet, they are eliminated in the process of constructing the automaton.

\paragraph{Checking for consistency.}
The first automaton accepts an input sequence of the form above if it corresponds to a consistent path, i.e., if $a_i \in L_i$ for all $i$. This automaton has a constant number of states. 

\paragraph{Checking condition (A)}
Let $\mathcal{A}_S$ be a finite automaton for the negation of $\varphi_S$. The automaton for (A) is a product of $\mathcal{A}_S$ and $E$. It guesses a \emph{finite} computation $\gamma$ of $E$, of the form $e_0;[\beta_0,a_0,\beta'_0];e_1;[\beta_1,a_1,\beta'_1];\ldots;[\beta_n,a_n,\beta'_{n}];e_n$, where the $\beta$ and $\beta'$ symbols represent sequences of private actions of $E$ (the intermediate states of $E$ are not shown, and the grouping in square brackets, e.g., $[\beta_0,a_0,\beta'_0]$, is just for clarity). The automaton also concurrently guesses a run of the finite-word automaton $\mathcal{A}_S$ on $\trace(\gamma)=\beta_0,a_0,\beta'_0,\ldots,\beta_n,a_n,\beta'_n$. It accepts if the automaton is in an accepting state, the final state $e_n$ of $\gamma$ has no transitions on private actions, and none of its enabled public transitions are in $L_n$, the final label. This automaton has $|\mathcal{A}_S|*|E|$ states.

\paragraph{Checking condition (B)} 
Let $\mathcal{A}_L$ be a finite B\"uchi automaton for the negation of $\varphi_L$. The automaton for (B) guesses, as in the construction for case (A), a finite computation $\gamma$ of $E$ of the form $e_0;[\beta_0,a_0,\beta'_0];e_1;[\beta_1,a_1,\beta'_1];\ldots;[\beta_n,a_n,\beta'_{n}];e_n$, where the $\beta$ and $\beta'$ symbols represent sequences of private actions of $E$, and it concurrently also guesses a run of the infinite word automaton $\mathcal{A}_L$ on $\trace(\gamma)=\beta_0,a_0,\beta'_0,\ldots,\beta_n,a_n,\beta'_n$ to an automaton state, say, $q$. It then guesses a further infinite execution $\gamma'$ of $E$, starting at the final state $e_n$ of $\gamma$, such that all transitions on $\gamma'$ are on private actions of $E$, and it checks that at each intermediate state of $\gamma'$, none of the enabled public transitions are in $L_n$. It concurrently simulates an extension to the run of $\mathcal{A}_L$ on $\trace(\gamma')$ from state $q$ and accepts if this run is accepting for $\mathcal{A}_L$. This automaton has  $|\mathcal{A}_L|*|E|$ states.

\paragraph{Checking condition (C)} 
Let $\mathcal{A}_L$ be a finite B\"uchi automaton for the negation of $\varphi_L$. The automaton for (C) guesses an infinite computation $\gamma$ of $E$ of the form $e_0;[\beta_0,a_0,\beta'_0];e_1;[\beta_1,a_1,\beta'_1];\ldots$ on the infinite trace $a_0,a_1,\ldots$, and it concurrently simulates a run of the automaton $\mathcal{A}_L$ on $\trace(\gamma)$ (which contains private actions represented by the $\beta$ and $\beta'$ symbols), accepting if this run accepts. This automaton has $|\mathcal{A}_L|*|E|$ states.

\paragraph{Constructing the Specification automaton.} 
We construct an automaton by taking the union of the automata for (A)-(C) and intersecting the union with the automaton that checks consistency of the input sequence. This is then a non-deterministic B\"uchi automaton which accepts an input sequence if it is consistent and satisfies one of (A)-(C). Its complement co-B\"uchi automaton accepts an input sequence if it is inconsistent or it fails to meet either of (A)-(C). This co-B\"uchi automaton is the {\em specification automaton}. The specification  automaton has a  number of states proportional to $(|\mathcal{A}_S|+|\mathcal{A}_L|)*|E|$. The number of transitions is, however, exponential in the number of public actions, as transitions are labeled with subsets of public actions. 

\subsection{Hardness of Coordination Synthesis}

We show that the coordination synthesis problem is \PSPACE-hard in terms of the size of the environment, $E$, keeping the specification formula fixed.

\begin{theorem}
  Coordination Synthesis is \PSPACE-hard in $|E|$ for a fixed specification. 
\end{theorem}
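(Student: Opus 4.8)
The plan is to prove \PSPACE-hardness by a polynomial-time reduction from the \emph{non-universality} problem for nondeterministic finite automata (NFA), which is \PSPACE-complete: given an NFA $N=(Q,\Lambda,\Delta,q_0,F)$, decide whether $\Lang(N)\neq\Lambda^{*}$, i.e., whether some finite word is rejected. I would assume $N$ is \emph{complete} (every state has an outgoing transition on every symbol), since any NFA can be completed in polynomial time by adding a single non-accepting sink without changing its language. The guiding intuition is that a coordinator operating with partial knowledge maintains exactly a \emph{belief set} -- the subset of environment states consistent with the public actions performed so far -- and this belief set evolves precisely as in the subset (determinization) construction for $N$. Non-universality then corresponds to the coordinator being able to steer the belief set into a subset containing no accepting state.

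Given $N$, I would build an environment $E$ with one fixed specification $\varphi=(\varphi_S,\varphi_L)$, where $\varphi_S=\emptyset$ (forcing every solution to be non-blocking) and $\varphi_L=\Eventually\Always(\lNot b)$ (``finitely many $b$''), exactly the specification used throughout the examples. The public actions are $\Lambda\cup\{\mathsf{chk},\mathsf{a}\}$ and the single private action is $b$. For each $q\in Q$ the environment has equations
\begin{align*}
  & E_q = \Bigl(\;\big|_{\,\lambda\in\Lambda,\ q'\in\Delta(q,\lambda)}\ \lambda \then E'_{q'}\Bigr)\ \choice\ \mathsf{chk}\then C_q, \\
  & E'_{q'} = b \then E_{q'},
\end{align*}
where $C_q = \mathsf{a}\then C_q$ (a ``good'' $\mathsf{a}$-loop with no $b$) if $q\notin F$, and $C_q = b\then C_q$ (a ``bad'' $b$-loop) if $q\in F$; the start state is $E_{q_0}$. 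The two key design choices are: (i) every symbol read forces one hidden $b$ via the intermediate state $E'_{q'}$, so a coordinator reading infinitely many symbols incurs infinitely many $b$'s and violates $\varphi_L$; and (ii) the action $\mathsf{chk}$, enabled in every $E_q$, tests the \emph{actual} (hidden) environment state: a non-accepting state leads to a $b$-free loop, while an accepting state leads to an infinite $b$-loop violating $\varphi_L$. Completeness of $N$ guarantees that every $\lambda\in\Lambda$ is enabled in every $E_q$, so the coordinator never deadlocks merely by offering a symbol.

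For correctness I would argue both directions. By Theorem~\ref{thm:deterministic} I may assume a solution $M$ is externally deterministic with no internal actions, so $M$'s state after a given public-action sequence $u\in\Lambda^{*}$ is well defined and independent of the environment's hidden choices; the set of environment states reachable after the public word $u$ (with the interleaved forced $b$'s) is exactly $\widehat{\Delta}(\{q_0\},u)$. If $\Lang(N)\neq\Lambda^{*}$, I pick a rejected word $u$ with $\widehat{\Delta}(\{q_0\},u)\cap F=\emptyset$; the coordinator that plays $u$, then $\mathsf{chk}$, then loops on $\mathsf{a}$ is a solution, since every branch reaches a good $C_q$ with only finitely many (exactly $|u|$) $b$'s and never deadlocks. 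Conversely, suppose $M$ is a solution. Every maximal computation is infinite (as $\varphi_S=\emptyset$) and has finitely many $b$'s, so it performs only finitely many reads and must eventually synchronize on $\mathsf{chk}$ (the only remaining public action at an $E_q$) to avoid deadlock; were the belief $\widehat{\Delta}(\{q_0\},u)$ at that point to contain an accepting state $q\in F$, then since $\mathsf{chk}$ is enabled in that $E_q$, the fair computation in which the environment sits in $q$ would enter the bad $b$-loop and violate $\varphi_L$, contradicting that $M$ is a solution. Hence $M$ takes $\mathsf{chk}$ at some public word $u$ with $\widehat{\Delta}(\{q_0\},u)\cap F=\emptyset$, so $u$ is rejected and $\Lang(N)\neq\Lambda^{*}$. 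Since $|E|$ and the construction are polynomial in $|N|$ and $\varphi$ is fixed, \PSPACE-hardness in $|E|$ follows.

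The main obstacle, and the step needing the most care, is ruling out ``trivial'' winning behaviors that would make every instance realizable regardless of $N$ -- in particular a coordinator that simply reads symbols forever, never committing to a $\mathsf{chk}$. The forced hidden $b$ on every read, together with the fixed ``finitely many $b$'' objective, is precisely what defeats this escape and forces the coordinator to commit to a test from which the belief set must be accept-free, thereby tying realizability to the exponential belief-set reachability underlying NFA non-universality. A secondary point requiring attention is verifying that the relevant violating computations (the $b$-loops reached through $\mathsf{chk}$, and the bad $C_q$) are \emph{fair} in the sense of the paper; this holds because in those loops no interaction between $E$ and $M$ is ever enabled, so no infinitely-often-enabled synchronization is being starved.
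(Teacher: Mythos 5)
Your proof is correct and takes essentially the same route as the paper: a polynomial-time reduction from NFA non-universality in which the coordinator must commit to a word, the environment's external nondeterminism tracks all runs of the automaton (the belief set), and a distinguished test action is fatal exactly when the belief set contains an accepting state, with the backward direction likewise invoking Theorem~\ref{thm:deterministic} and the fairness of both the "read forever" and the "stuck in a private loop" computations. The only substantive difference is the commitment-forcing gadget: you interleave a forced hidden $b$ after every letter and reuse the running-example specification $\Eventually\Always(\lNot b)$, whereas the paper's environment has \emph{no} private actions at all and instead forces commitment through the fixed specification $\Eventually(\sharp \lAnd \Next\Always(-))$, so the paper's construction additionally shows that the hardness already arises from interface nondeterminism alone, without any hidden actions.
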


\begin{proof}
  The proof is by reduction from a standard \PSPACE-complete problem: given a finite automaton, $A$, determine whether the language of $A$ is \emph{not} the universal language over its alphabet, $\Sigma$.  Without loss of generality, we suppose that $A$ is complete -- i.e., each state has a transition on every letter in $\Sigma$.

  From $A$, we construct a synthesis instance $(E,\varphi)$ as follows. The environment $E$ is a copy of $A$, with two additional states, $\acc$ and $\rej$, and three additional letters, $\sharp$, $+$, and $-$ . Its transition relation is that of $A$, together with the following additional transitions. First, from each accepting state of $A$, there is a transition on $\sharp$ to the $\acc$ state. The $\acc$ state has a self-loop transition on $+$. Second, from each non-accepting state of $A$, there is a transition on $\sharp$ to the $\rej$ state. The $\rej$ state has a self-loop transition on letter $-$. Note that $E$ has no internal actions but, in general, will have non-deterministic choice on interface actions.

  The specification $\varphi$ is given by $\varphi_S = \emptyset$ (i.e., non-blocking solutions are required) and $\varphi_L$ being the LTL formula  $\Eventually(\sharp \lAnd \Next\Always(-)))$. We show that the language of $A$ is non-universal if, and only if, the synthesis problem $(E,\varphi)$ has a solution. 

  In the left-to-right direction, suppose that $A$ has a non-universal language. Let $w$ be a word that is not accepted by $A$. Let $M$ be the deterministic process with a single execution with trace $w\sharp(-)^{\omega}$. We claim that $M$ is a solution for $(E,\varphi)$. As $A$ is complete, there can be no maximal computation of $E\parallel M$ with a trace that is a strict prefix of $w$. And as every run of $w$ on $A$ ends in a non-accepting state, every maximal computation of $E \parallel M$ is infinite and has the trace $w\sharp(-)^{\omega}$, which meets the specification. Hence, $M$ is a solution to the synthesis problem. 
  
  In right-to-left direction, let $M$ be any solution to the synthesis problem. Thus, $M$ is non-blocking, every maximal computation of $E\parallel M$ is infinite and its trace satisfies $\varphi_L$. From Theorem~\ref{thm:deterministic}, we may suppose that $M$ is deterministic and has no internal actions. As $M$ is non-blocking and $E$ has no dead-end states by construction, there is at least one infinite joint computation, denoted $x$, of $E \parallel M$. As $E\parallel M$ satisfies $\varphi_L$, the trace of $x$ must have the pattern $w\sharp(-)^{\omega}$, for some $w \in \Sigma^{*}$. We show that $w$ is rejected by $A$. Let $s$ be the state of $M$ in $x$ following the prefix with trace $w$. Then $s$ has a transition on $\sharp$.  

  Suppose, to the contrary, that there is an accepting run of $A$ on $w$ ending in some state $t$. Thus, there is a joint computation of $E\parallel M$ that follows this run and ends at the joint state $(t,s)$. As both $t$ and $s$ have transitions on $\sharp$, this computation can be extended to $(\acc,s')$ on $\sharp$, for $s'$. As $\acc$ only has transitions on $+$, and as $M$ is non-blocking, this computation must extend further to a computation with trace $w\sharp(+)^{\omega}$. But that computation does not satisfy $\varphi_L$, a contradiction.
  %
\end{proof}

\section{Symbolic Constructions\label{Sec:Symbolic}}
In this section, we give a fully symbolic construction for the automaton of the transformed specification.
The symbolic form ameliorates the exponential blowup in the size of its transition relation. The structure of the automaton is defined as follows. 
\begin{itemize}
\item The states are either (1) special states $\Fail$ and $\Sink$, or (2) normal states of the form $(q,r,e)$, where $q$ is a state of $\mathcal{A}_L$,  $r$ is a state of $\mathcal{A}_S$, and $e$ is an environment state.

\item The initial state is $(q_0,r_0,e_0)$ where all components are initial in their respective structures.

%
%
\item The input symbols have the form $(a,L,g)$, where $a$ is an public action, $L$ is a set of public actions, and $g$ is a Boolean marking the transition as being ``green'' (accepting) if true. 
  
\item The transition relation of the joint automaton is described as a predicate, $T(\mathsf{current\_state},$ $\mathsf{input\_symbol},$ $\mathsf{next\_state})$. On all input symbols, the special states $\Fail$ and $\Sink$ have a self loop; i.e., the next state is the same as the current state. The transition relation for normal states is defined below.

\item The green (accepting) edges of the automaton are all transitions where $g$ is true.

\item The green (accepting) states of the automaton are the special state $\Fail$ and every state $(q,r,e)$ where $q$ is a green state of $\mathcal{A}_L$.
\end{itemize}

For a normal state $(q,r,e)$ and input symbol $(a,L,g)$, the transition relation $T$ is defined as follows. It relies on several auxiliary relations which are defined below. 

\begin{enumerate}
\item (Condition (A)) If $\Efail(r,e,L)$ holds,  the successor state is $\Fail$.
\item (Condition (B)) Otherwise, if $\noSynch(q,r,e,L)$ holds, the successor state is $\Fail$.
\item (Non-blocking) Otherwise, if $\Esink(a,e,L)$ holds, the successor state is $\Sink$.
\item (Condition (C)) Otherwise,  if $\normtrans((q,r,e), (a,g), (q',r',e')) $ holds, the successor is the normal state $(q',r',e')$.
\end{enumerate}


The auxiliary relations represent the existence of paths, and are defined in the form of a least fixpoint. Each of these fixpoint computations, as well as the construction of the transition relation, can be computed symbolically using BDDs. We provide the fixpoint formulations below. In these formulations, for an action $c$ (public or private), the joint transition $\Joint((q,r,e),c,(q',r',e'))$ holds iff $\mathcal{A}_L(q,c,q')$, $\mathcal{A}_S(r,c,r')$, and $E(e,c,e')$ all hold. 

Following condition (A), predicate $\Efail(r,e,L)$ holds if there is a path consisting only of private transitions from state $e$ in $E$ to an end-state $e'$ that fails the $\varphi_S$ property. The state $e'$ can have no private transitions and no enabled public actions in $L$. Being a reachability property, the predicate is defined as the least fixpoint of $Y(r,e,L)$ where
\begin{itemize}
\item (Base case) If $r$ is a final state of $\mathcal{A}_S$, $e$ has no private transitions, and none of its enabled public actions is in $L$, then $Y(r,e,L)$ is true, and
\item (Induction) If there exists $r',e'$ and a private action $b$ such that $Y(r',e',L)$, $\mathcal{A}_S(r,b,r')$ and $E(e,b,e')$ hold, then $Y(r,e,L)$ holds.
\end{itemize}


Following condition (B), predicate $\noSynch(q,r,e,L)$ holds if there exists a infinite path consisting of only private transitions that fails the $\varphi_S$ property, and from some point on $L$ does not synchronize with the public actions enabled on the path. 
One can easily argue that it is  sufficient to find an {\em infinite lasso path} consisting  of only private transitions that fails the $\varphi_S$ property, and no state in the {\em loop of the lasso} synchronizes with $L$. Formally, $\noSynch(q,r,e,L)$ holds if there exists a private action $b$, states $q_0,r_0,e_0,g_0$ and $q_1,r_1,e_1,g_1$ such that all of the the following conditions hold
\begin{itemize}
	\item $\Eprivate((q,r,e),g_0,(q_0,r_0,e_0))$, 
	\item $\Joint((q_0,r_0,e_0),b,(q_1,r_1,e_1))$ 
	\item $\geneprivate((q_1,r_1,e_1),g_1,L, (q_0,r_0,e_0))$, and 
	\item $g \lEquiv g_1$. 
\end{itemize}

The predicate $\Eprivate((q,r,e),g,(q',r',e'))$ holds if there is a path consisting only of private transitions from state $e$ in $E$ to $e'$, a run of the $\mathcal{A}_S$ automaton on this path from state $r$ to $r'$, a run of the $\mathcal{A}_L$ automaton on the same path from state $q$ to $q'$, and $g$ is true if one of the states on the $\mathcal{A}_L$ automaton run is a green (i.e., B\"uchi accepting) state.  Being a reachability property, this can be defined as the least fixpoint of $Z((q,r,e),g,(q',r',e'))$, where
\begin{itemize}
	\item (Base case) If $q=q',r=r',e=e'$ then $Z((q,r,e),g,(q',r',e'))$ holds, and $g$ is true iff $q$ is green, and
	\item (Induction) If $Z((q,r,e),g_0,(q_0,r_0,e_0))$ and $J((q_0,r_0,e_0),b,(q',r',e'))$ for a private action $b$, then   $Z((q,r,e),g,(q',r',e'))$ holds, with $g$ being true if $g_0$ is true or $q'$ is green.
\end{itemize}

Predicate $\geneprivate((q,r,e), g, L, (q',r',e'))$  generalizes
$\Eprivate((q,r,e),g,(q',r',e'))$ by additionally ensuring that no state along the run from $e$ to $e'$ in $E$ enables a public action in $L$.
Predicate $\Esink(a,e,L)$ holds if either $a \not \in L$ (inconsistent), or $\lNot \Enabled(a,e)$ (blocking) where
predicate $\Enabled(a,e)$ holds if there is a path in $E$ consisting only of private transitions from state $e$ to a state from which there is a transition on public symbol $a$. Being a reachability property, this predicate is defined as the least fixpoint of $X(a,e)$ where
\begin{itemize}
	\item (Base case) If $e$ has a transition on $a$, then $X(a,e)$ holds.
	\item (Induction) If there exists $e'$ and a private action $b$ such that $E(e,b,e')$ and $X(a,e')$ hold, then $X(a,e)$ holds.
\end{itemize}

  Following condition (C), predicate $\normtrans((q,r,e), (a,g), (q',r',e')) $ holds if 
  there is a path of private actions to the normal target state $(q',r',e')$ with a single public transition on $a$, along with  matching runs of the two automata that fails the $\varphi_S$ property.  Formally, $\normtrans((q,r,e), (a,g), (q',r',e')) $ holds if there exists  states $q_0,r_0,e_0,g_0$ and $q_1,r_1,e_1,g_1$ such that all of the the following conditions hold
  \begin{itemize}
  	\item $\Eprivate((q,r,e),g_0,(q_0,r_0,e_0))$, 
  	\item $\Joint((q_0,r_0,e_0),a,(q_1,r_1,e_1))$, 
  	\item $\Eprivate((q_1,r_1,e_1),g_1,(q',r',e'))$, and 
  	\item $g \lEquiv g_0 \lOr g_1$. 
  \end{itemize}

\subsubsection*{Encoding into Pnueli-Rosner synthesis}
The resulting automaton has transitions on $(a,L,g)$. The green transitions can be converted to green states by a simple construction (omitted) that adds a copy of the state space. This automaton operates on input sequences of the form $(a_0,L_0),(a_1,L_1),\ldots$. That matches the form of the \emph{synchronous} Pnueli-Rosner model where automata operate on input sequences of the form $(x_0,y_0),(x_1,y_1),\ldots$, where the $x$'s represent input values and the $y$'s the output value chosen by the synthesized process. By reinterpreting the $a$'s as $x$'s and $L$'s as $y$'s, one can use existing tools for synchronous synthesis, to check for realizability and produce a solution if realizable. In this re-interpretation, if the public alphabet has $n$ symbols, there are $\mathsf{log}(n)$ bits for $x$ (i.e., $a$), and $n$ bits for $y$ (i.e., $L$). 

\subsubsection*{Mapping a solution back to CSP}
The synchronous synthesis tools  produce a deterministic Moore machine as a solution to the reinterpreted synthesis problem. Such a machine maps a sequence of $x$-values (i.e., a sequence of public actions, reinterpreted) to a $y$-value (i.e., a set of public actions, reinterpreted). That is precisely the CSP process $M$, defined as follows. The states of $M$ are the states of the Moore machine. There is a transition $(s,a,t)$ in $M$ if, and only if, $a$ is included in the output label $y$ of the Moore machine at state $s$, and the machine has a transition from $s$ to $t$ on input $x=a$.

\section{Implementation and Case Studies}
\label{sec:implementation}

\subsection{Implementation and Workflow}
We implement the coordination synthesis algorithm in about $1500$ lines of Python code. The implementation uses the \textsf{FDR4} CSP tool~\cite{DBLP:journals/sttt/Gibson-Robinson16,fdr4} to read in a CSP description of the environment written in the language $CSP_M$, and produce a flattened description in the form of a state machine for the environment $E$. It uses features of the \textsf{SPOT} system~\cite{spot,lutzspot} to read in LTL formulas that define the specifications $\varphi_S$ and $\varphi_L$, and to convert those to the appropriate automaton form. The core of the construction (about $1000$ lines of Python) implements the symbolic specification automaton construction described in Sections~\ref{sec:treeview} and~\ref{Sec:Symbolic}. The resulting automaton, $\mathcal{B}$, is simplified using the SPOT tool \textsf{autfilt} to, for instance, remove dead states. It is then supplied to the synchronous synthesis engine \textsf{BoSy}~\cite{Bosy,FFRT17}. BoSy produces a Moore machine, which is converted into a CSP process as described at the end of Section~\ref{Sec:Symbolic}. We are grateful to the authors of these and supporting tools for making the tools freely available.

As defined in Section~\ref{sec:treeview}, the automaton $\mathcal{B}$ operates on inputs from the alphabet $\Sigma \times 2^{\Sigma}$, where $\Sigma$ is the set of public interface actions of the environment $E$. Thus, an input symbol has the form $(a,L)$, where $a \in \Sigma$ and $L$ is a subset of $\Sigma$. BoSy expects these values to be encoded as bit vectors defining the input, denoted $x$, and output, denoted $y$, of the synchronous synthesis problem. For simplicity, we choose a 1-hot encoding for the symbol $a$; i.e., we have a bit $x_m$ for each symbol $m$ in $\Sigma$, and ensure that exactly one of those bits is true -- to represent symbol $a$, the bit $x_a$ is set to true while all others are set to false. To encode the set $L$, we define bits $y_m$ for each symbol $m$ in $\Sigma$. Thus, a set $L$ is encoded by setting $y_m$ to be true for all $m$ in $L$, and $y_m$ to false for all $m$ not in $L$.

Recall from Section~\ref{Sec:Prelims} that BoSy encodes the synchronous synthesis problem as constraint solving, either propositional (with SAT solvers) or quantified (with QBF solvers). The QBF form has the quantifier prefix $\exists\forall\exists$, where the universal quantification is over the input variables; in effect, the SAT encoding is obtained by replacing the $\forall$ quantifier with a conjunction over all values of the input variables. In our case, the input variables represent elements of $\Sigma$, as explained above. As this set is not very large (for our examples), we can replace the $\forall$ quantifier with a conjunction over all elements of $\Sigma$, and obtain an equivalent SAT problem. Doing this in practice required a small modification to BoSy, as its default is to expand over \emph{all} assignments  to input variables, which would lead to a $|2^{\Sigma}|$ blowup in the expansion for our 1-hot encoding. Instead, we modify BoSy to expand over only 1-hot assignments to the input variables, which gives the expected linear $|\Sigma|$ blowup.
We refer to our SAT encoding by OneHot-SAT.

In prior work on BoSy~\cite{FFRT17} the authors report that QBF outperforms SAT. Much to our surprise, we observed the reverse: the SAT encoding handily outperforms QBF, as shown in the experiment tables later in this section. We use the
state-of-the-art SAT solver \textsf{CryptominiSAT}~\cite{DBLP:conf/sat/SoosNC09}.

All experiments were run on 8 CPU cores at 2.4GHz, 16GB RAM, running 64-bit Linux.

\subsection{Case studies}

The objective of our case studies is to demonstrate the utility of coordination synthesis in designing coordination programs under various aspects of asynchrony, partial information and concurrency. We begin by testing our implementation on examples from Section~\ref{Sec:MotivatingEg} (Section~\ref{Sec:CaseStudy-Examples}), and then follow it up by designing a coordination program for a {\em smart thermostat} (Section~\ref{Sec:CaseStudy-Thermostat}) and an {\em arbiter} for concurrent processes (Section~\ref{Sec:CaseStudy-Arbiter}). We explore different aspects of the applicability of coordination synthesis in each of these studies, ranging from modeling choices to fairness in concurrency. Finally, we discuss the scalability challenges in Section~\ref{Sec:CaseStudy-Verdict}.

\subsubsection{Illustrative examples from \textsection~\ref{Sec:MotivatingEg}}
\label{Sec:CaseStudy-Examples}
We begin with demonstrating the correctness of our implementation by synthesizing all specifications from the illustrative examples given in Section~\ref{Sec:MotivatingEg}. This set of examples is representative of the major features and complexities of the program model that our synthesis procedure must account for, and hence make for a test bench with good coverage.

Each input specification consists of an environment process, a safety LTL specification and a liveness LTL specification. In this set of inputs, the safety and liveness specifications are fixed to $ \mathsf{False}$ and $\Eventually\Always(\neg b)$, respectively.
Each input instance is run twice, once when the synchronous synthesis tool BoSy invokes its default QBF encoding, and once when BoSy invokes our domain-specific OneHot-SAT encoding. If realizable, we output the synthesized coordinator as a CSP process. 
Our observations are summarized in Table~\ref{tab:illustrativeEg}.
Our main observations are:
\begin{enumerate}
\item Our implementation returned the \textbf{expected} outcomes for  realizable specifications. The tool times out on unrealizable specifications, which is expected as the bound $N$ (Section~\ref{sec:temporalsynth}) for BoSy to guarantee unrealizability is very large. 

\item For all but one example  (Example 4) the coordination programs obtained from both the encoding options of BoSy were identical to the ones constructed by hand in Section~\ref{Sec:MotivatingEg}.

In Example 4, both encodings returned the same coordinator, but different from the one in Section~\ref{Sec:MotivatingEg}. The synthesized program  is \textbf{ smaller } than the ones made by hand.

\end{enumerate}

These observations indicate that our implementation is faithful to the theoretical development of coordination synthesis.

\begin{table}[t]
\centering
\caption{Evaluation of implementation on Illustrative examples. Safety spec is $\mathsf{False}$, liveness spec is $\Eventually\Always (\neg b)$.}
\label{tab:illustrativeEg}
\begin{tabular}{|c|c|c|c|c|}
\hline
\multirow{3}{*}{CSP} & \multirow{3}{*}{Synthesized coordinator} & \multicolumn{3}{c|}{Run time (in milliseconds, timeout=10s)}                                  \\ \cline{3-5} 
                     &                                          & \multirow{2}{*}{Spec. aut. construction} & \multicolumn{2}{c|}{Sync. synthesis} \\ \cline{4-5} 
                     &                                          &                                          & QBF (Default)           & OneHot-SAT         \\ \hline
Example 0            & $M = a_0 \then M$                        & 138                                      & 955             & 86                 \\ \hline
Example 1            & $M = a_0 \then M$                        & 135                                      & 1082            & 89                 \\ \hline
Example 2            & \textsf{None}                             & 137                                      & Timeout         & Timeout            \\ \hline
Example 3            & $M = a_0 \then M$                        & 136                                      & 798             & 86                 \\ \hline
Example 4            & $M = a0 \then M \choice a1\then M$       & 138                                      & 773             & 77                 \\ \hline
Example 5            & \textsf{None}                             & 138                                      & Timeout         & Timeout            \\ \hline
\end{tabular}
\end{table}

\subsubsection{Thermostat: Case study on iterative development}
\label{Sec:CaseStudy-Thermostat}
 
 \begin{figure}[t]
 	\centering
 	\begin{minipage}{0.55\textwidth}
 		\centering
 		\begin{tikzpicture}[shorten >=1pt,node distance=2.5cm,on grid,auto] 
 		
 		\node[state,initial] (q_0)   { \footnotesize \textsf{JR}}; 
 		\node[state] (q_2) [right=of q_0] {\footnotesize\textsf{TW}}; 
 		\node[state] (q_1) [left=of q_0] {\footnotesize \textsf{TC}}; 
 		
 		\path[->] 
 		(q_0)   edge [loop above] node { \footnotesize \textsf{JustRight}} ()
 		edge [bend left = 30, red] node {\footnotesize \textsf{HeatisOn}} (q_2)
 		edge [bend right = 10,dashed] node {} (q_2)
 		edge [bend right = 30, blue] node [above] {\footnotesize \textsf{ACisOn}} (q_1)
 		edge [bend right = 10,dashed] node  {} (q_1)	
 		
 		(q_1) 
 		edge [bend right = 30, red] node [sloped,below] {\footnotesize \textsf{HeatisOn}} (q_0)
 		edge [bend right = 10,dashed] node {\footnotesize \textsf{}} (q_0)
 		edge [loop left, blue] node { \footnotesize \textsf{ACisOn}} ()
 		edge [loop above] node { \footnotesize \textsf{Cold}} ()
 		
 		(q_2) 	edge [bend left = 30, blue] node  {\footnotesize \textsf{ACisOn}} (q_0)
 		edge [bend right = 10,dashed] node {\footnotesize \textsf{}} (q_0)
 		edge [loop right, red] node { \footnotesize \textsf{HeatisOn}} ()
 		edge [loop above] node { \footnotesize \textsf{Warm}} ();
 		
 		\end{tikzpicture}
 		\caption{Room temperature $\mathsf{Sensor}$ CSP (JR: Just right, TW: Too warm, TC: Too cold)}
 		\label{Fig:Sensor}
 	\end{minipage}
 	\hfill
 	\centering
 	\begin{minipage}{0.4\textwidth}
 		\centering
 		\begin{minipage}{\textwidth}
 			\centering
 			\begin{tikzpicture}[shorten >=1pt,node distance=2.35cm,on grid,auto] 
 			\node[state,initial] (q_0)   { \footnotesize \textsf{OFF}}; 
 			\node[state] (q_1) [right=of q_0] {\footnotesize \textsf{ON}}; 
 			\path[->] 
 			(q_0)   
 			edge [bend left = 20]node { \footnotesize \textsf{switchHeatOn}}  (q_1)
 			(q_1) edge  [loop right, red] node  { \footnotesize \textsf{HeatisOn}} ()
 			edge [bend left = 20]node {\footnotesize \textsf{switchHeatOff}} (q_0);
 			
 			\end{tikzpicture}
 			\caption{$\mathsf{Heater}$ CSP}
 			\label{Fig:Heater}
 		\end{minipage}
 		\begin{minipage}{\textwidth}
 			\centering
 			\begin{tikzpicture}[shorten >=1pt,node distance=2.35cm,on grid,auto] 
 			\node[state,initial] (q_0)   { \footnotesize \textsf{OFF}}; 
 			\node[state] (q_1) [right=of q_0] {\footnotesize \textsf{ON}}; 
 			\path[->] 
 			(q_0)   
 			edge [bend left = 20]node { \footnotesize \textsf{switchACOn}}  (q_1)
 			(q_1) edge  [loop right, blue] node  { \footnotesize \textsf{ACisOn}} ()
 			edge [bend left = 20]node {\footnotesize \textsf{switchACOff}} (q_0);

 			\end{tikzpicture}
 			\caption{$\mathsf{AC}$ CSP}
 			\label{Fig:AC}
 		\end{minipage}
 		
 	\end{minipage}
 \end{figure}


 Synthesis can be thought of as a declarative paradigm for programming where a developer can focus on the ``what''s of the coordination problem rather than on the ``how"s. Through this case study we illustrate how  synthesis simplifies the development of coordination programs. Our objective is to design a {\em smart thermostat}. 
 
 A smart thermostat interacts with a room-temperature sensor  (sensor, in short), 
 a heater, and an air-conditioner in order to maintain a comfortable room temperature. The temperature can be affected by the mode (switch-on or switch-off) of
 the heater and air-conditioner, and by external physical factors such as weather  fluctuations, which are unpredictable and cannot be controlled. These factors prevent the smart thermostat from assessing the room temperature correctly from the modes of the devices alone. As a result, the smart thermostat must communicate with the sensor to check the room temperature, and respond accordingly to maintain ambient temperature.

 CSP processes modeling the  sensor, heater and air-conditioner are given in Figures
 \ref{Fig:Sensor}-\ref{Fig:AC}. The states of the sensor denote 
 the current temperature,
 while states of the heater and air-conditioner denote their mode.
 The dashed-transitions in the sensor model fluctuations in room temperature caused by changing external physical conditions and cause internal actions. 
 The actions \textsf{HeatIsOn} (red transitions) and \textsf{AcIsOn} (blue transitions)  cause private agent-to-agent interactions between the sensor and
 the heater or air-conditioner, respectively.  
 Finally, the sensor communicates the current room temperature to the
 smart thermostat through actions \textsf{Cold}, \textsf{JustRight},
 \textsf{Warm}, and the heater and air-conditioner interact with the
 smart thermostat through the \textsf{Switch} actions (black transitions).
 Therefore, the flat environment in represented by
 $$ \mathsf{ENV} = \mathsf{Heater} ||_{\{\mathsf{HeatisOn}\}} \mathsf{Sensor} ||_{\{\mathsf{ACisOn}\}} \mathsf{AC}$$
 Its interface with the coordinator consists of public actions $\{\mathsf{JustRight}, \mathsf{Cold}, \mathsf{Warm}, \mathsf{switchACOff}$, $\mathsf{switchACOn}, \mathsf{switchHeatOff}, \mathsf{switchHeatOn}\}$.
 
 The safety specification is simply deadlock freedom, given by the formula $\mathsf{False}$.
 The goal is to maintain ambient temperature. So, we begin with the following liveness specification, asserting that the temperature is infinitely often just right: $$\mathsf{AmbientTemp} := \Always\Eventually (\mathsf{JustRight})$$
 
 This coordination synthesis specification was satisfied by the trivial coordinator $$M = \mathsf{JustRight} \then M$$
 
 Clearly, this coordinator relies on the internal actions of the sensor (dashed transitions) to maintain ambient temperature. It never interacts with the Heater or the AC. To ensure that the synthesized coordinator interacts with the heater and AC, we append the following condition to the liveness specification:
 $$ \mathsf{Interact} := \Always\Eventually\mathsf{switchACOn} \wedge \Always\Eventually\mathsf{switchHeatOn}$$
 
 With the modified liveness specification, i.e., $\mathsf{AmbientTemp} \wedge \mathsf{Interact}$, the coordination synthesis procedure returned a 4-processes coordinator depicted in Fig~\ref{Fig:ThermSecond}. It overcomes the deficiency from the previous coordinator by engaging the heater and AC infinitely often.
  This engagement is guaranteed due to the cycle $\mathsf{M_1} \xrightarrow{\mathsf{switchACOn}} \mathsf{M_3} \xrightarrow{\mathsf{switchHeatOn}} \mathsf{M_3}  \xrightarrow{\mathsf{switchHeatOff}} \mathsf{M_2} \xrightarrow{\mathsf{JustRight}}  \mathsf{M_1}  \xrightarrow{\mathsf{switchACOff}}  \mathsf{M_2} \xrightarrow{\mathsf{JustRight}}  \mathsf{M_1} $. 
  In all executions of the coordinated system, the coordinator will be forced to visit state $\mathsf{M_1}$. From here on, the coordinator is forced to take the actions as shown in the cycle. 
 But this is still unsatisfactory, as it allows the Heater and AC to be both switched on at the same time. This can be observed along the execution  $\mathsf{M_0} \xrightarrow{\mathsf{JustRight}} \mathsf{M_1} \xrightarrow{\mathsf{switchACOn}} \mathsf{M_3} \xrightarrow{\mathsf{switchHeatOn}} \mathsf{M_3} \cdots$. Therefore, we add another condition to the liveness specification that enforces that the Heater and AC are not switched on at the same time:
 \begin{align*}
  \mathsf{EnergyEfficient}  := & \neg\Eventually\big((\mathsf{switchACOn}) \wedge (\neg\mathsf{switchACOff} \Untill \mathsf{switchHeatOn})\big) \\
						    \wedge & \neg\Eventually\big((\mathsf{switchHeatOn}) \wedge (\neg\mathsf{switchHeatOff} \Untill \mathsf{switchACOn})\big)
 \end{align*}

 The coordinator obtained by modifying the liveness specification to $\mathsf{AmbientTemp} \wedge \mathsf{Interact} \wedge \mathsf{EnergyEfficient}$ is given in Fig~\ref{Fig:ThermThird}. It is a simple 3-state coordinator. The runtime analysis has been presented in Table~\ref{tab:thermostat}.


\begin{figure}[t]
	\centering
	\begin{minipage}{0.6\textwidth}
		\centering
		\begin{tikzpicture}[shorten >=1pt,node distance=2.5cm,on grid,auto] 
		
		\node[state] (q_0)   { \footnotesize $\mathsf{M_1}$}; 
		\node[state] (q_2) [right=of q_0] {\footnotesize$\mathsf{M_2}$}; 
		\node[state, initial] (q_1) [left=of q_0] {\footnotesize $\mathsf{M_0}$};
		\node[state] (q_3) [below right=of q_0] {\footnotesize$\mathsf{M_3}$};
		\path[->] 
		(q_0)   
		edge [bend left = 10] node {\footnotesize \textsf{switchACOff}} (q_2)
		edge  node [right] {\footnotesize \textsf{switchACOn} } (q_3)	
		edge  node [left] {\footnotesize \textsf{switchHeatOff}} (q_3)

		(q_2)  
		edge [bend left = 10] node {\footnotesize \textsf{JustRight}}   (q_0)
		
		(q_1) edge [bend right = 10] node [sloped,below] {\footnotesize \textsf{Cold}} (q_0)
		edge [bend left = 10] node [sloped,above] {\footnotesize \textsf{JustRight}} (q_0)
		edge  [bend right = 30] node  [left] {\footnotesize \textsf{switchHeatOff}} (q_3)
		(q_3) 	edge [loop right] node { \footnotesize \textsf{switchHeatOn}} ()
		 edge [bend right = 35] node [right] {\footnotesize \textsf{switchHeatOff}} (q_2);
		
		\end{tikzpicture}
		\caption{Coordinator when liveness spec. is $\mathsf{AmbientTemp} \wedge \mathsf{Interact}$}
		\label{Fig:ThermSecond}
	\end{minipage}
	\hfill
	\centering
	\begin{minipage}{0.35\textwidth}
		\centering
			\begin{tikzpicture}[shorten >=1pt,node distance=2.5cm,on grid,auto] 
			\node[state,initial] (q_0)   { \footnotesize $\mathsf{M_0}$}; 
			\node[state] (q_1) [below left=of q_0] {\footnotesize $\mathsf{M_1}$}; 
			\node[state] (q_2) [below right=of q_0] {\footnotesize $\mathsf{M_2}$}; 
			\path[->] 
			(q_0)   
			edge [bend right = 10] node [left] { \footnotesize \textsf{switchACOn}}  (q_1)
			edge node  { \footnotesize \textsf{switchHeatOff}}  (q_2)
			(q_1) 
			edge [bend right = 10 ]node [right] {\footnotesize \textsf{JustRight}} (q_0)
			edge [bend right = 10] node [below] {\footnotesize \textsf{switchACOff}} (q_2)
			
			(q_2)
			edge [bend right = 10] node [above] {\footnotesize \textsf{switchHeatOff}} (q_1)
			edge [loop below] node  { \footnotesize \textsf{switchHeatOn}} ();

			\end{tikzpicture}
			\caption{Coordinator when liveness spec. is $\mathsf{AmbientTemp} \wedge \mathsf{Interact} \wedge \mathsf{EnergyEfficient}$}
			\label{Fig:ThermThird}
	\end{minipage}
\end{figure}

\begin{table}[t]
	\caption{Runtime analysis of thermostat case-study. CSP process is $\mathsf{ENV}$, Safety spec is $\mathsf{False}$.}
	\label{tab:thermostat}
	\begin{tabular}{|c|c|c|c|c|}
		\hline
		\multirow{3}{*}{Liveness spec}                                                                                                                & \multirow{3}{*}{Synthesized Coordinator}                         & \multicolumn{3}{c|}{Run time ( in seconds, timeout=1000s)}                                                                               \\ \cline{3-5} 
		&                                                                 & \multirow{2}{*}{\begin{tabular}[c]{@{}c@{}}Spec. aut. \\ construction\end{tabular}} & \multicolumn{2}{c|}{Sync. synthesis}               \\ \cline{4-5} 
		&                                                                 &                                                                                     & QBF (Default)            & OneHot-SAT              \\ \hline
		$\mathsf{AmbientTemp}$                                                                                                                        & $M = \mathsf{JustRight} \then M$                                & 0.172                                                                               & 1.788                    & 0.429                   \\ \hline
		$\mathsf{AmbientTemp} \wedge \mathsf{Interact}$                                                                                               & Fig~\ref{Fig:ThermSecond}                 & 0.177                                                                               & Timeout                  & 40.152                  \\ \hline
		\multirow{2}{*}{\begin{tabular}[c]{@{}c@{}}$\mathsf{AmbientTemp} \wedge \mathsf{Interact}$ \\ $\wedge \mathsf{EnergyEfficient}$\end{tabular}} & \multirow{2}{*}{Fig~\ref{Fig:ThermThird}} & \multirow{2}{*}{0.188}                                                              & \multirow{2}{*}{Timeout} & \multirow{2}{*}{90.292} \\
		&                                                                 &                                                                                     &                          &                         \\ \hline
	\end{tabular}
\end{table}

\subsubsection{Arbiter: Case study on fairness}
\label{Sec:CaseStudy-Arbiter}

In this  case study we synthesize an arbiter that allocates a shared resource to multiple concurrent processes. The arbiter must guarantee that the resource is not accessed by multiple processes at the same time, and ensure that every requesting process is eventually granted the resource. The study shows how fairness can be incorporated in specifications in order to construct appropriate solutions. 



Formally, a process $\mathsf{P(i)}$ is defined as follows. The process does a cycle of (a). {\em request} to access the common resource, (b) the request is {\em granted} and it accesses the resource, and (c). it {\em releases} the resource.
\[
\mathsf{P(i)} = \mathsf{request.i} \then\mathsf{grant.i} \then\mathsf{release.i} \then \mathsf{P(i)}
\]
For $n>1$, let $\mathsf{ENV_n} = ||_{i\in \{0\dots n-1\}} \mathsf{P(i)}$ denote the environment process, which is the parallel  composition of $n$ copies of $P(i)$. 
Each processes synchronizes with the arbiter (coordinator) on actions $\mathsf{request.i}$, $\mathsf{grant.i}$, and $\mathsf{release.i}$. Therefore $\mathsf{request.i}$, $\mathsf{grant.i}$ , and $\mathsf{release.i}$ actions for all processes are public actions at the interface of $\mathsf{ENV_n}$ with the arbiter.
As these process do not interact internally with each other, and neither do they have an internal action action of their own, $\mathsf{ENV_n}$ does not have any private action and is a fully synchronous environment. 

The safety specification is simply deadlock freedom, indicated by the LTL formula $\mathsf{False}$. The liveness specification includes mutual exclusion and starvation freedom for every process. 
An arbiter can guarantee mutual exclusion if it ensures that each process must release access to the resource before another process is granted access to it:
\[
\mathsf{Mutex}_n := \bigwedge_{i\in [n]} \neg \big(\Eventually (\mathsf{grant.i} \wedge  \big(\neg \mathsf{release.i} \Untill \bigvee_{j \neq i}\mathsf{grant.j} \big)\big)
\]
Starvation freedom is guaranteed if every requesting process is eventually granted access:
\[
\mathsf{StarveFreedom}_n := \bigwedge_{i\in[n]} \Always \big( \mathsf{request.i} \rightarrow \Eventually \mathsf{grant.i} \big)	
\]

At $n=2$, i.e., with two processes, our synthesis procedure generated the following arbiter the liveness specification given by $\mathsf{Mutex}_2 \wedge \mathsf{StarveFreedom}_2$:
\[
M = \mathsf{request.0} \then \mathsf{grant.0} \then \mathsf{release.0} \then M
\]

The arbiter  $M$ trivially satisfies the safety and liveness specifications on the environment process as requests from  processes other than $\mathsf{ P(0)}$ are never enabled. By construction, $M$ is also fair w.r.t. $\mathsf{ENV}_2$. Yet this is clearly not our intended coordinator.  
We observe that  the {\em degree of fairness} for $M$ may differ with respect to individual processes. Specifically, $M$ is fair to process $P(0)$, but is only {\em vacuously} fair to $\mathsf{P(1)}$, as it never offers to synchronize with $\mathsf{P(1)}$.

An even stronger notion of fairness than the one assumed here would rule out such vacuous solutions, but that would require modifications to the algorithm. 
Instead, we modify the liveness specification by appending a condition that forces computations where the coordinator accepts a request from each process:
\[
\mathsf{SimulateStrongFairness}_n := \bigwedge_{i\in[n]}\Always\Eventually \mathsf{request.i}
\]

Our synthesis procedure returned the following arbiter on modifying the liveness specification to $\mathsf{Mutex}_2 \wedge \mathsf{StarveFreedom}_2 \wedge \mathsf{SimulateStrongFairness}_2$

\begin{align*}
	& M_\mathsf{Arbiter2} = 
	\mathsf{request.1} \then M_1 \choice 
	\mathsf{grant.1} \then M_\mathsf{Arbiter2} \choice 
	\mathsf{release.1} \then M_\mathsf{Arbiter2} \\
	& M_1  = 
	\mathsf{request.0} \then M_\mathsf{Arbiter2} \choice 
	\mathsf{grant.0} \then M_1  \choice 
	\mathsf{release.0} \then M_1  	
\end{align*}

This solution satisfies the conditions of desired arbiter for 2-process environment and exhibits a stronger notion of fairness. In fact, it exhibits round-robin behavior despite the specification not including that as a requirement. 
This suggests that using higher-level models such as CSP could have advantages in generating good controllers despite under-specification. 

Similar observations were made for $n=3$. The synthesized coordination program is as follows:
\begin{align*}
  & M_\mathsf{Arbiter3} = 
  \mathsf{grant.1} \then M_1 \choice 
  \mathsf{request.1} \then M_2 \choice 
  \mathsf{release.1} \then M_2 \\
  & M_1  = 
  \mathsf{request.2} \then M_\mathsf{Arbiter3} \choice 
  \mathsf{request.1} \then M_3 \choice 
  \mathsf{grant.1} \then M_3  \\
  & M_2 =
  \mathsf{request.0} \then M_\mathsf{Arbiter3} \choice 	
  \mathsf{grant.0} \then M_2 \choice 
  \mathsf{release.0} \then M_3 \\
  & M_3 =
  \mathsf{grant.2} \then M_1 \choice
  \mathsf{release.2} \then M_2 \choice 
  \mathsf{release.1} \then M_2 \choice 
  \mathsf{request.2} \then M_3
\end{align*}
The runtime analysis has been presented in Table~\ref{tab:arbiter}.
Unfortunately, the implementation did not scale beyond $n=4$. 
But scalability is a challenge in temporal synthesis generally, e.g., recent experiments in~\cite{FFRT17} on the widely-studied synchronous synthesis method show that arbiter synthesis is limited to 4-7 processes.

\begin{table}[t]
	\caption{Runtime analysis of arbiter case-study. For $n>1$, the CSP process is  $\mathsf{ENV_n}$, Safety spec is $\mathsf{False}$, and Liveness spec is $\mathsf{Mutex}_n \wedge \mathsf{StarveFreedom}_n \wedge \mathsf{SimulateStrongFairness}_n$}
	\label{tab:arbiter}
	\begin{tabular}{|c|c|c|c|c|}
		\hline
		\multirow{3}{*}{\begin{tabular}[c]{@{}c@{}}Number of \\ processes ($n$)\end{tabular}} & \multirow{3}{*}{\begin{tabular}[c]{@{}c@{}}Synthesized\\ Coordinator\end{tabular}} & \multicolumn{3}{c|}{Run time (in seconds, timeout=1500s)}                                                                               \\ \cline{3-5} 
		&                                                                                    & \multirow{2}{*}{\begin{tabular}[c]{@{}c@{}}Spec. aut. \\ construction\end{tabular}} & \multicolumn{2}{c|}{Sync. synthesis}               \\ \cline{4-5} 
		&                                                                                    &                                                                                     & QBF (Default)            & OneHot-SAT              \\ \hline
		2                                                                                     & $M_\mathsf{Arbiter2}$                                                                                  & 0.166                                                                              & 8.255                    & 0.6902                 \\ \hline
		3                                                                                     &  $M_\mathsf{Arbiter3}$                                                                & 0.417                                                                              & 1126.266                   & 56.63                  \\ \hline
	\end{tabular}
\end{table}


\subsubsection{Concluding remarks}
\label{Sec:CaseStudy-Verdict}
Our case studies have demonstrated the promise of synthesis in the design of coordination programs. Synthesis raises the level of abstraction and offers considerable flexibility to adjust specifications to alter or introduce requirements. Automated synthesis may even result in simpler and smaller coordination programs than those written by hand. The case studies also show, however, that the prototype implementation is limited to solving small problem instances. We analyze why this is so, which suggests new and interesting directions for further research. 




The automaton-theoretic constructions of Section~\ref{Sec:Symbolic} crucially reduce a coordination synthesis problem to a question of synchronous synthesis. Tables~\ref{tab:illustrativeEg}-~\ref{tab:arbiter} show that the time taken by this procedure is a small fraction of the total time taken by the coordination synthesis procedure. Furthermore, as the size of problem increases, the fraction of time spent inside this reduction phase decreases considerably.

The bulk of the time and space requirements are taken in the synchronous synthesis step. As described earlier in this Section, we use BoSy as the synchronous (bounded) synthesis engine. BoSy encodes a synchronous synthesis problem into a constraint-solving problem, using either SAT or QBF solvers. Although prior investigations suggest that QBF is more effective~\cite{FFRT17}, our domain-specific OneHot-SAT encoding handily outperforms QBF, even though the size of the encoded constraint is larger. This, we believe, is because SAT solvers are more mature than QBF solvers. 





The scalability of the BoSy encoding is influenced by two factors: the number of states in the  co-B\"uchi automaton given as input to BoSy, and the size of the automaton alphabet. In our experiments, the symbolic encoding of the alphabet, coupled with the OneHot-SAT encoding over the inputs keeps the second factor under control. Although the number of states of the co-B\"uchi automaton generated by our transformation procedure is \emph{linear} in the size of the environment and the automaton for the LTL specification, this size appears to be the key limiting factor for the BoSy encoding (and for related tools, such as Acacia+). The limit appears to be automata with more than about $250$ states. 

This limit explains why the arbiter case-study did not scale well beyond 4 processes. The number of states in the flat CSP environment with 4 processes is $3^4$ = 81 , and the automaton for safety are liveness specifications have 1 and 14 states, respectively. As a result, the naive construction of the universal co-B\"uchi automaton consists of $2\cdot 81\cdot 14 = 2286$ states. Even after SPOT's optimization we are left with 703 states, which is beyond the capabilities of BoSy and other synchronous synthesis solvers.

Clearly, better encodings to SAT and QBF, and improvements to the underlying synchronous synthesis tools will help for coordination synthesis. We believe, though, that a gradual improvement in back-end tools can only go so far. It is necessary to bring in higher-level proof concepts such as abstraction and modular reasoning to tackle these large state spaces. For instance,  the arbiter environment is fully symmetric across the processes, which suggests that a combination of symmetry and modular reasoning should help reduce the complexity of synthesis. New research is needed to formulate such algorithms. 


\section{Discussion and Related Work}

The task of coordinating independent processes is one that is of importance in many domains. Coordination must work in the face of concurrency, asynchrony, partial information, noisy data, and unreliable components. Automated synthesis methods could be of considerable help in tackling these challenges.

In this work, we have formalized the synthesis question as follows. The component processes and the coordinator are modeled as CSP processes that interact using the standard CSP handshake mechanism. The specification is given in linear temporal logic, or as a co-B\"uchi automaton, describing desired sequences of interactions. The main contributions are (1) in formulating an expressive model for the problem, as existing models turn out to be inadequate in various ways; (2) in the solution method, which is the first (to the best of our knowledge) to fully handle partial information, and do so for arbitrary LTL specifications; and (3) by giving a complexity-theoretic analysis of the coordination synthesis problem. We show that there are significant practical and complexity-theoretic limitations to scalability, which suggests that research should be directed towards new synthesis methods that can handle large state spaces.


Reactive synthesis has been studied for several decades, starting with Church's formulation of this question in the 1950s (cf.~\cite{thomas2009}). Much of the prior work is on synthesis for the \emph{synchronous, shared-variable} model which, as argued in the introduction, is not a good match for the coordination problems that arise in the motivating domains of multi-robot and IoT coordination. We review the relevant related work in detail below.

\subsubsection*{Reactive Synchronous Synthesis}
Church's formulation of reactive synthesis was inspired by applications to synthesis of hardware circuits, and thus assumes a synchronous model, where the synthesized component alternates between input and output cycles. This question has been thoroughly studied. In particular, the tree-based view of synthesis originates from this line of work, in particular the seminal results of Rabin~\cite{rabin69}.

The seminal work of Pnueli and Rosner~\cite{pnueli1989synchsynthesis} on synchronous synthesis from LTL specifications was followed by the discovery of efficient solutions for the GR(1) subclass~\cite{gr1journal2012,piterman2006synthesis}, ``Safraless'' procedures~\cite{kupferman2005safraless}, and bounded synthesis methods~\cite{compositionalsynthesis,schewe2007bounded}. Those methods, in particular, have been implemented in a number of tools, e.g.,~\cite{bohy2012acacia+,ehlers2010symbolic,ehlers2011unbeast,bosy2017tool,jobstmann-bloem-lily-2006,jtlv2010} and applied in diverse settings (cf.~\cite{d2013synthesizing,kress2010automatic,LOTM13,MS11,MS12}). We re-use the results of this line of research, transforming the asynchronous coordination synthesis problem to an input-output synchronous synthesis problem, which can be solved by several of these tools.

\subsubsection*{Reactive Asynchronous Synthesis}
The seminal work in asynchronous synthesis from linear temporal specifications, also for a shared-variable model, is that of Pnueli and Rosner~\cite{pnueli1989synthesis}. This model is motivated by applications to asynchronous hardware design. In the model, an adversarial scheduler chooses when the synthesized system can sample the input stream. The specification, however, has a full view of all inputs and outputs. The highly adversarial nature of the scheduler coupled with the low-atomicity reads and writes to shared memory makes it difficult, however, to find specifications that are realizable. The original algorithm of Pnueli-Rosner for asynchronous synthesis transforms the problem into a canonical synthesis problem which fits the synchronous model. That algorithm is, however, too complex for practical implementation, although work in~\cite{pnueli2009synthesis,klein2012effective} has devised heuristics that help solve certain cases.

In recent work~\cite{DBLP:conf/cav/BansalNS18} present a significantly simpler and exponentially more compact construction that has been implemented. Our method for incorporating hidden actions is similar to the mechanism used there to compress unobserved input values, while going beyond it to handle other forms of partial knowledge in the richer CSP model. Indeed, one can encode the asynchronous problem as a coordination synthesis problem in CSP.

Schewe and Finkbeiner~\cite{DBLP:conf/lopstr/ScheweF06} model asynchronous synthesis as the design of ``black-box'' processes (unknowns) that interact with known ``white-box'' processes. There are significant differences in modeling and in the solution strategy. Their model assumes that processes are deterministic. In our model, processes may be non-deterministic, which permits the abstraction of complex internal behavior when representing real devices or robots. A second important difference is that of the communication mechanism. In their model, communication is via individual reads and writes to shared variables. As argued in the introduction, the message-passing model of CSP permits a higher degree of atomicity; it is also a better fit for the motivating domains. There are other differences as well: their solution strategy is based on tree automata, as in~\cite{pnueli1989synchsynthesis}, which has proved to be difficult to implement as it requires complex $\omega$-automaton determinization constructions; ours is based on significantly simpler constructions that do not require determinization. 


\subsubsection*{Reactive Synthesis for CSP and similar models}
Manna and Wolper were the first to consider synthesis of CSP processes from LTL specifications~\cite{manna1981synthesis,wolper1982specification}; in their model, environment entities can interact only with the controller and have no hidden actions. This is relaxed in recent work~\cite{DBLP:journals/tse/CiolekBDPU17}, where the environment can be non-deterministic; however, all actions are visible except a distinguished $\tau$ action. The model, therefore, cannot distinguish between multiple hidden actions; neither can the specification refer to such actions. As argued in the introduction, hidden actions arise naturally from information hiding principles; our work removes both these limitations. Web services composition has been formulated in CSP-like models: in~\cite{DBLP:conf/icsoc/BerardiCGLM03}, a coordinator is constructed based on a branching-time specification in Deterministic Propositional Dynamic Logic, but only in a model with full knowledge.

The seminal work of Ramadge and Wonham~\cite{ramadge-wonham-1989} on discrete event control considers a control problem with restricted specifications -- in particular, the only non-safety specification is that of non-blocking. The synthesis algorithm in~\cite{madhusudan-2001} extends the Ramadge-Wonham model with a CSP-like formulation and allows branching-time specifications, but is also based on full knowledge of environment actions. To the best of our knowledge, this work is the first to consider partial knowledge from hidden actions and to allow arbitrary linear-time specifications.

\subsubsection*{Process-algebraic Synthesis}
In a different setting,	work by Larsen, Thomsen and Liu~\cite{DBLP:conf/lics/LarsenT88,DBLP:conf/lics/LarsenX90} considers how to solve process algebraic inequalities of the form $E\parallel X \preceq S$ where $E$ is the environment process, $X$ is the unknown process, $S$ is a process defining the specification, and $\preceq$ is a suitable process pre-order. They formulate an elegant method, using algebraic manipulations to systematically modify $S$ based on $E$ so that the inequality is transformed to a form $X \preceq S'$ which has a clear solution; the transformed specification $S'$ can be viewed as a ``quotient'' process $S/E$. The construction of a quotient can incur exponential blowup, as shown in~\cite{DBLP:conf/concur/BenesDFKL13}. 

\subsubsection*{Other Synthesis Methods}
Synthesis methods for shared-memory systems with branching-time specifications were developed in Emerson and Clarke's seminal work~\cite{emerson1982using}. Wong and Dill~\cite{wong1990synthesizing} also consider the synthesis of a controller under synchronous and asynchronous models for shared-variable communication. In~\cite{DBLP:conf/fossacs/LustigV09}, the authors study a different but related problem of linking together a library of finite-state machines to satisfy a temporal specification.

As discussed in Section~\ref{Sec:Prelims}, synthesis questions may also be solved by producing winning strategies for infinite games. The complexity of games with partial information is studied in~\cite{DBLP:journals/jcss/Reif84}, while symbolic algorithms for solving such games are presented in~\cite{DBLP:journals/lmcs/RaskinCDH07}. These game formulations, however, do not allow for asynchrony. 

There is a large literature on synthesis from input-output examples, representative results include synthesis of expressions to fill ``holes'' in programs~\cite{solar2006combinatorial} and the synthesis of string transformations~\cite{harris2011spreadsheet}. In most such instances, the synthesized program is terminating and non-reactive. One may view a LTL specification as describing an unbounded set of examples. The methods used for example-based synthesis are quite different, however; and it would be fruitful to attempt a synthesis (pun intended) of the example-driven and logic-based approaches.

\begin{acks}                            
  This work was supported, in part, by the \grantsponsor{GS100000001}{National Science
    Foundation}{http://dx.doi.org/10.13039/100000001} under Grant No.~\grantnum{GS100000001}{CCF-1563393}.
  Any opinions, findings, and conclusions or recommendations expressed are those of the author(s) and
  do not necessarily reflect the views of the National Science Foundation.
\end{acks}

\bibliography{refs}

\section*{Appendix}
\renewcommand{\thesection}{A}

\subsection{Simulating The Pnueli-Rosner Asynchronous Model in CSP}
\label{Sec:async-sim}

We show how to encode the Pnueli-Rosner asynchronous model, which is described in the introduction. The input $x$ is encoded by the process definitions below, $X_0$ and $X_1$ represent the two possible values of $x$. Here, $r_0,r_1$ are public actions used to ``read'' the value ($0$ or $1$) of $x$, while $h_0,h_1$ are hidden transitions which model the non-reading points as chosen by an adversarial scheduler. 
\begin{align*}
  & X_0 = r_0 \then X_0 \choice r_0 \then X_1 \choice h_0 \then X_0 \choice h_0 \then X_1 \\
  & X_1 = r_1 \then X_1 \choice r_1 \then X_0 \choice h_1 \then X_1 \choice h_1 \then X_0
\end{align*}
The encoding of output $y$ is simpler. Here, $w_0,w_1$ are public actions that simulate ``writes'' to $y$ of the values $0$ and $1$, respectively.
\begin{align*}
  & Y_0 = w_0 \then Y_0 \choice w_1 \then Y_1  \\
  & Y_1 = w_1 \then Y_1 \choice w_0 \then Y_0 
\end{align*}
Finally, we let the coordinator choose the starting $Y$ state, via the process
\begin{equation*}
  \hat{Y} = w_0 \then Y_0 \choice w_1 \then Y_1
\end{equation*}
But leave the choice of starting $X$ state undetermined, via the process below, where $\tau$ is a private action. 
\begin{equation*}
  \hat{X} = \tau \then X_0 \choice \tau \then X_1 
\end{equation*}
The original specification $\varphi$, is interpreted over an infinite sequence of pairs of $x,y$ values: $(x_0,y_0),(x_1,y_1),\ldots$. This has to be transformed to be interpreted over action sequences that mimic the behavior. Such a ``well-formed'' sequence satisfies several conditions:
\begin{itemize}
\item The sequence begins with a $\tau$ action, then alternates actions from $\Sigma_y =\{w_0,w_1\}$ and $\Sigma_x = \{r_0,r_1,h_0,h_1\}$.
\item Read actions, i.e. $\{r_0,r_1\}$, appear infinitely often.
\item Between successive read actions,  there is at most one change in the write action. (For instance, the sequence $w_0,w_1,w_0$ is not allowed.) This mimics the fact that there is only a single write to $y$ following the latest read.
\end{itemize}
It is easy to program a finite automaton to check these conditions. The automaton for the negated transformed specification $\lNot \varphi'$, is constructed as the product of an automaton that checks well-formedness and the automaton for the negated original specification $\varphi$, converted to analyze sequences of the alternating form defined in the first condition above. The conversion simply replaces an original transition on a pair $(x=i,y=j)$ by a sequence of transitions: the first on $w_j$, the second on $\{r_i,h_i\}$.

With these transformations, it is straightforward to show that $\varphi$ is asynchronously realizable over sequences of the form $(x_0,y_0),(x_1,y_1),\ldots$ if and only if $\varphi'$ is realizable for the environment $E = \hat{X} \parallel \hat{Y}$.  \qed

\subsection{Simulating The Pnueli-Rosner Synchronous Model in CSP}
\label{Sec:sync-sim}
This is a simpler simulation. The process representing the input $x$ is just the following, where there are no internal actions. Thus, all reads are synchronized. 
\begin{align*}
  & X_0 = r_0 \then X_0 \choice r_0 \then X_1  \\
  & X_1 = r_1 \then X_1 \choice r_1 \then X_0 
\end{align*}
The process representing $y$ is as before, as are the initial process definitions for $\hat{Y}$ and $\hat{X}$. The well-formedness conditions are also the same, except that $\Sigma_x$ is now only $\{r_0,r_1\}$. With these definitions and transformations,
it is straightforward to show that $\varphi$ is synchronously realizable over sequences of the form $(x_0,y_0),(x_1,y_1),\ldots$ if and only if $\varphi'$ is realizable for the environment $E = \hat{X} \parallel \hat{Y}$.  \qed

\subsection{Proof of Theorem~\ref{thm:deterministic}}

\begin{theorem*}
A synthesis instance $(E,\varphi)$ is realizable if, and only if, it has a deterministic solution process that has no internal actions. 
\end{theorem*}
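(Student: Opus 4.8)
The reverse implication is immediate: a deterministic solution process with no internal actions is, in particular, a solution, so its existence witnesses realizability of $(E,\varphi)$. The plan is therefore to establish the forward implication by taking an arbitrary solution $M$ and transforming it, in two stages, into a deterministic solution $M''$ without internal actions. The guiding principle throughout is that the internal actions of $M$ (those outside $\Public$ and $\Private$) lie outside the specification alphabet $\kappa = \Public \cup \Private$, and hence are invisible to $\varphi$.

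In the first stage I would eliminate the internal actions of $M$. I define a process $M'$ on the same state set, placing a public transition $(s,a,t)$ into $M'$ exactly when $M$ admits a path from $s$ to $t$ whose trace has the form $\beta\,a\,\beta'$, with $a \in \Public$ and $\beta,\beta'$ finite sequences of internal actions of $M$; the process $M'$ has no internal actions. I would then set up a correspondence between computations of $E \parallel M$ and computations of $E \parallel M'$: collapsing each block of $M$-internal steps in a computation of $E \parallel M$ yields a computation of $E \parallel M'$ with the same projection onto $\kappa$, and conversely each $M'$-step expands back to a witnessing $M$-path. Since the $\kappa$-projections agree, the traces seen by $\varphi_S$ and $\varphi_L$ are preserved, and the goal is to conclude that $M'$ is again a solution.

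In the second stage I would remove external non-determinism from $M'$. Rather than a subset construction, which risks an exponential blow-up, I would simply prune: at each state $s$ and each public action $a$, retain exactly one of the $M'$-transitions on $a$ out of $s$ (if any exist) and discard the rest. The resulting $M''$ is externally deterministic, still free of internal actions, and its transition relation is a sub-relation of that of $M'$; moreover the pruning leaves the set of enabled public actions unchanged at every state, so $M'$ simulates $M''$ while \emph{preserving enablement}. Consequently every computation of $E \parallel M''$ is also a computation of $E \parallel M'$, and because enablement is preserved the two systems share identical deadlock states and identical readiness-to-interact structure. Hence a maximal finite (resp.\ infinite fair) computation of $E \parallel M''$ is a maximal finite (resp.\ infinite fair) computation of $E \parallel M'$; as $M'$ is a solution, so is $M''$, which completes the argument.

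The definitions and the simulation bookkeeping of the second stage are routine. The main obstacle is the correctness of the first stage: I must show that collapsing internal actions preserves the \emph{maximal-finite / infinite-fair} classification on which solution-hood is defined, not merely the $\kappa$-trace. The delicate case is when $M$ performs an infinite run of internal actions, so that a computation of $E \parallel M$ has only finitely many synchronizations and a finite $\kappa$-projection, yet has no literal counterpart in $E \parallel M'$ (which performs no internal steps) and may instead collapse to a \emph{maximal finite} computation on the $M'$ side. I expect to discharge this using the fairness hypothesis: such an infinite computation, having a finite $\kappa$-projection, can satisfy $\varphi_L$ only if it is unfair, and its unfairness must stem from $M$ being repeatedly ready to synchronize with $E$ on a public action without doing so — but that same public action is, by construction, enabled at the corresponding state of $M'$, so the collapsed computation is \emph{not} maximal and introduces no spurious $\varphi_S$-violation. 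A parallel care is needed to check that expanding an infinite fair computation of $E \parallel M'$ back into $E \parallel M$ cannot flip it from fair to unfair; since each inserted internal block is finite and synchronizations still occur infinitely often, I would argue that the readiness structure is preserved up to the invisible padding.
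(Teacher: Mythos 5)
Your proposal is correct and takes essentially the same route as the paper's own proof: the identical two-stage construction (first collapsing internal actions of $M$ via paths with traces of the form $\beta;a;\beta'$, then removing external non-determinism by pruning transitions rather than a subset construction), together with the same enablement-preservation argument in the second stage. The delicate points you flag --- infinite internal tails of $M$ and preservation of fairness/maximality under expansion --- are precisely the subtleties the paper's proof addresses in its parenthetical remarks.
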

\begin{proof}
  
  Let $M$ be a process that is a solution to the instance $(E,\varphi)$. Let $\Delta=\Public \Union \Private$. 

  In the first stage, we eliminate private transitions in $M$ (this set is necessarily disjoint from $\Delta$) to obtain $M'$ that is also a solution but has no internal non-determinism. The states of $M'$ are those of $M$. The transition relation of $M'$ is defined as follows: for states $s,t$ and public action $a$, a triple $(s,a,t)$ is in the transition relation of $M'$ if, and only if, there is a path with the trace $\beta;a;\beta'$ from $s$ to $t$, where $\beta$ and $\beta'$ are both sequences of internal actions of $M$. Consider any maximal computation $x$ of $E \parallel M'$. This computation can be turned into a maximal computation $y$ of $E \parallel M$ simply by restoring the sequences of internal actions of $M$ used in defining each transition of $M'$ (A subtle point is that it is possible for $x$ to be finite and the corresponding maximal sequence $y$ to be infinite if there is an infinite path (a ``tail'') of private transitions of $M$ originating at the final state of $x$.) The two computations have identical traces when projected on $\Delta$, and if $x$ is infinite and fair, so is $y$. As $y$ satisfies $\varphi$ by the assumption that $M$ is a solution, so does $x$.

  In the second stage, we eliminate external nondeterminism from $M'$ to obtain $M''$ that is also a solution but has no external non-determinism. This could be done by determinizing $M'$ using the standard subset construction, but there is a simpler construction that does not incur the worst-case exponential blowup; it simply restricts the transition relation. $M''$ has the same states and initial state as $M'$; however, its transition relation, $T''$, is such that $T''(s,a) \subseteq T'(s,a)$, and $|T''(s,a)|=1$ iff $|T'(s,a)|\geq 1$. Informally, $T''$ chooses one of the successors of $T'$ on action $a$ from each state. By construction, $M''$ is  externally deterministic and has no internal actions. 

   Unlike with the subset construction, the traces of $M''$ obtained by restriction could be a proper subset of those of $M'$, but it is still a solution. Consider any maximal computation $x$ of $E \parallel M''$.  By construction, $x$ is also a maximal computation of $E\parallel M'$. As this satisfies $\varphi$ by assumption, so does $x$. (A subtle point is that if $x$ ends in a dead-end state of $M''$, this state is also a dead-end state of $M'$, as the restriction only removes duplicate transitions.)
\end{proof}

\subsection{Proof of Theorem~\ref{thm:tree-bisim}}

\begin{theorem*}
For a deterministic CSP process $M$, the process $M' = \proc(\fulltree(M))$ is bisimular to $M$.
\end{theorem*}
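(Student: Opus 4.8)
The plan is to exhibit an explicit relation between the states of $M'=\proc(\fulltree(M))$ — which are precisely the tree nodes $\Public^{*}$ — and the states of $M$, and then check that it is a bisimulation containing the pair of initial states. The natural candidate exploits determinism of $M$: for a string $\sigma\in\Public^{*}$ there is \emph{at most one} computation of $M$ whose trace is $\sigma$, so I can define a partial map $\rho:\Public^{*}\rightharpoonup\State$ that sends $\sigma$ to the unique state reached by that computation when it exists (with $\rho(\epsilon)=\Start$). I would then take the relation $R=\{(\sigma,s):\rho(\sigma)\text{ is defined and equals }s\}$ and claim it is a bisimulation.

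First I would record the defining identity that ties the two constructions together: whenever $\rho(\sigma)=s$, the tree label is exactly $\mu_M(\sigma)=\enabled{s}$, the set of actions enabled at $s$; and whenever $\rho(\sigma)$ is undefined, $\mu_M(\sigma)=\emptyset$. This is immediate from the definition of $\fulltree(M)$. Since $\rho(\epsilon)=\Start$, the initial states are related, so it remains only to verify the forth and back conditions.

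For the \emph{forth} direction, suppose $\sigma\mathrel{R}s$ and $\sigma\trans{a}\sigma'$ in $M'$. By the definition of $\proc$, this edge exists iff $a\in\mu_M(\sigma)$ and $\sigma'=\sigma;a$; using the identity above, $a\in\enabled{s}$, so $a$ is enabled at $s$ in $M$, and by determinism it has a unique successor $s'=\delta(s,a)$. Then $\rho(\sigma;a)=s'$, giving $\sigma'\mathrel{R}s'$, as required. For the \emph{back} direction, suppose $\sigma\mathrel{R}s$ and $s\trans{a}s'$ in $M$. Then $a\in\enabled{s}=\mu_M(\sigma)$, so $\proc$ provides the edge $\sigma\trans{a}\sigma;a$ in $M'$; again by determinism $s'=\delta(s,a)$ and $\rho(\sigma;a)=s'$, so $\sigma;a\mathrel{R}s'$. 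Hence $R$ is a bisimulation relating the initial states, and $M'$ is bisimilar to $M$.

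The only real subtlety — the one place to argue rather than wave at — concerns tree nodes $\sigma$ whose label $\mu_M(\sigma)$ is empty because $M$ has no computation with trace $\sigma$; these are states of $M'$ that $R$ relates to nothing. I would note that they cause no trouble: both the forth and back steps stay inside $R$ (every successor produced is again in the domain and range of $\rho$), so the relation is closed, and in fact such ``dead'' nodes are unreachable in $M'$, since an incoming edge into $\sigma;a$ would require $a\in\mu_M(\sigma)=\enabled{\rho(\sigma)}$, which forces $\rho(\sigma;a)$ to be defined. The complementary case, where $\rho(\sigma)=s$ but $\enabled{s}=\emptyset$ (e.g.\ $s=\STOP$), is harmless as well: both $\sigma$ and $s$ are then dead-ends and the bisimulation conditions hold vacuously. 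I do not expect any genuinely hard step here; determinism of $M$ does all the work, and the care lies entirely in matching the two definitions and in handling the empty-label nodes.
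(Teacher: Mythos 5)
Your proof is correct and follows essentially the same route as the paper's: the relation $R$ you define via the partial map $\rho$ is exactly the paper's bisimulation $B$ (relating $\sigma$ to the unique state reached by the computation with trace $\sigma$), and your forth/back checks use determinism and the identity $\mu_M(\sigma)=\enabled{s}$ in the same way. Your explicit treatment of the empty-label (unreachable) tree nodes is a small point of extra care that the paper's proof leaves implicit, but it does not change the argument.
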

\begin{proof}
      Let $M' = \proc(\fulltree(M))$, where $\fulltree(M) = (t,\mu_M)$. Define a relation $B$ between the states of $M'$ and those of $M$ by $(\sigma,s)$ is in $B$ iff $s$ is reachable in $M$ by the unique (by determinism) computation with trace $\sigma$. We show that $B$ is a bisimulation relation. By its definition, $(\epsilon,\Start_M)$ is in $B$. Consider a pair $(\sigma,s)$ in $B$.
      \begin{itemize}
      \item Let $(\sigma,a,\delta)$ be a transition in $M'$. By definition, $\delta=\sigma;a$ and  $a \in \mu_M(\sigma)$. By the definition of $\mu_M$, and as $s$ is the unique state reachable by $\sigma$ in $M$, $\mu_M(\sigma)$ is the set of actions enabled at $s$ in $M$. Thus, there is a transition $(s,a,t)$ in $M$, to some $t$. Then $(\delta,t)$ is in $B$. 

      \item Consider a transition $(s,a,t)$ in $M$. As $s$ is reachable by $\sigma$, $t$ is reachable by $\delta=\sigma;a$. By definition, $\mu_M(\sigma)$ is the set of actions enabled at $s$, so it contains $a$. Thus, the transition $(\sigma,a,\delta)$ is present in $M'$, and $(\delta,t)$ is in $B$. 
      \end{itemize}
%
\end{proof}

\subsection{Formulation of $\geneprivate$}
\label{Sec:GenEPrivate}

Predicate $\geneprivate((q,r,e), g, L, (q',r',e'))$  holds if there is a path of 0-length or more consisting only of private transitions from state $e$ in $E$ to $e'$, a run of the $\mathcal{A}_S$ automaton on this path from state $r$ to $r'$, a run of the $\mathcal{A}_L$ automaton on the same path from state $q$ to $q'$,
set $L$ does not intersect with the public actions enabled on all states of the run in $E$,
and $g$ is true if one of the states on the $\mathcal{A}_L$ automaton run is a green (i.e., B\"uchi accepting) state. 
Let $\mathsf{public}(e)$ denote the set of public actions enabled in state $e$ in $E$.
Being a reachability property, this can be defined as the least fixed point of $ZG((q,r,e),g,(q',r',e'))$, where
\begin{itemize}
	\item (Base case) If $q=q',r=r',e=e'$, $L \cap  \mathsf{public}(e) = \emptyset$, then $ZG((q,r,e),g,L, (q',r',e'))$ holds, and $g$ is true iff $q$ is green, and
	\item (Induction) If $ZG((q,r,e),g_0,L, (q_0,r_0,e_0))$ ,  $J((q_0,r_0,e_0),b,(q',r',e'))$ for a private action $b$, and $L \cap \mathsf{public}(e') = \emptyset$, then   $ZG((q,r,e),g,L,(q',r',e'))$ holds, with $g$ being true if $g_0$ is true or $q'$ is green.
\end{itemize}

\subsection{Arbiter for 3-processes}
\label{Sec:SolutionArbiter3}
The coordination program obtained from synthesis of the arbiter for 3-processes is $M_0$


\begin{align*}
  & M_0 = 
  \mathsf{grant.1} \then M_1 \choice 
  \mathsf{request.1} \then M_2 \choice 
  \mathsf{release.1} \then M_2 \\
  & M_1  = 
  \mathsf{request.2} \then M_0 \choice 
  \mathsf{request.1} \then M_3 \choice 
  \mathsf{grant.1} \then M_3  \\
  & M_2 =
  \mathsf{request.0} \then M_0 \choice 	
  \mathsf{grant.0} \then M_2 \choice 
  \mathsf{release.0} \then M_3 \\
  & M_3 =
  \mathsf{grant.2} \then M_1 \choice
  \mathsf{release.2} \then M_2 \choice 
  \mathsf{release.1} \then M_2 \choice 
  \mathsf{request.2} \then M_3
\end{align*}

\end{document}